\DeclareMathOperator*{\argmax}{argmax}
\newcommand{\bfzero}{\mathbf{0}}
\newcommand{\bfone}{\mathbf{1}}
\newcommand{\aff}{\mathrm{aff}}
\newcommand{\nequ}{{n_{\mathrm{eq}}}}
\newcommand{\nvar}{{n_{\mathrm{var}}}}
\newcommand{\nrec}{{n_{\mathrm{rec}}}}
\newcommand{\vmin}{{v_{\mathrm{min}}}}
\newcommand{\vmax}{{v_{\mathrm{max}}}}
\newcommand{\opt}{{{\mathrm{OPT}}}}
\newtheorem{theorem}{Theorem}[]
\newtheorem{proposition}[theorem]{Proposition}
\theoremstyle{definition}
\newtheorem{definition}[theorem]{Definition}
\newtheorem{remark}[theorem]{Remark}
\title{Algorithmic Bayesian persuasion with combinatorial actions}
\author{
	Kaito Fujii\\
	National Institute of Informatics\\
	fujiik@nii.ac.jp
	\and
	Shinsaku Sakaue\\
	The University of Tokyo\\
	sakaue@mist.i.u-tokyo.ac.jp
}
\begin{document}

\maketitle

\begin{abstract}
Bayesian persuasion is a model for understanding strategic information revelation: an agent with an informational advantage, called a sender, strategically discloses information by sending signals to another agent, called a receiver. In algorithmic Bayesian persuasion, we are interested in efficiently designing the sender's signaling schemes that lead the receiver to take action in favor of the sender. This paper studies algorithmic Bayesian-persuasion settings where the receiver's feasible actions are specified by combinatorial constraints, e.g., matroids or paths in graphs. We first show that constant-factor approximation is NP-hard even in some special cases of matroids or paths. We then propose a polynomial-time algorithm for general matroids by assuming the number of states of nature to be a constant. We finally consider a relaxed notion of persuasiveness, called CCE-persuasiveness, and present a sufficient condition for polynomial-time approximability.
\end{abstract}

\section{Introduction}
Information asymmetry is ubiquitous. For example, a seller has more information about a product than a customer, and a client sometimes knows more about a delegated task than a worker. Such an agent with an informational advantage often strategically discloses information to influence the receiver's decisions. A fundamental question in information economics asks how much influence an informational advantage has or what kind of information disclosure strategy the sender should use. 

Bayesian persuasion \citep{KG11} is a model for understanding strategic information disclosure.
In the standard setting, there are two agents called the \textit{sender} and the \textit{receiver}.
The sender, who has an informational advantage, strategically reveals information by sending a signal to the receiver.
Then, according to the revealed information, the receiver selects an action that maximizes his expected utility. 
The receiver's action also affects the sender's utility, which is usually different from the receiver's utility.
The sender's goal is to design a strategy for information disclosure, called a \textit{signaling scheme}, to lead the receiver to select an action that is favorable for the sender. 
In this paper, we are interested in the algorithmic aspect of Bayesian persuasion: when can we efficiently compute (approximately) optimal signaling schemes? 

In many practical scenarios, the receiver's actions are specified by some combinatorial constraints. 
For example, a receiver's action may be a path in a network or a portfolio of a limited number of stocks. 
In such cases, lists of the receiver's actions can be prohibitively large, making standard methods for computing the sender's signaling schemes impractical. 
To reveal when we can efficiently compute optimal signaling schemes in the presence of such combinatorial actions is an important question, but it remains unexplored as mentioned in \citep[Open Question 2.8]{D17}. 
This paper addresses this open question and elucidates several classes of problems for which we can/cannot efficiently compute (approximately) optimal signaling schemes.

To motivate our work, we below give concrete situations where the receiver's actions are combinatorial. 
An important application of Bayesian persuasion is financial advice. 
Suppose the sender and the receiver to be a financial adviser and an investor, respectively. 
The sender knows accurate predictions on stock returns, which are unknown to the receiver. 
Since the sender's returns are not always aligned with those of the receiver, the sender strategically reveals information to increase her returns. 
After receiving a signal (advice) from the sender, the receiver decides which stocks to buy. 
In practice, there are a huge number of stocks, and it is unrealistic to assume that the receiver can hold a portfolio of arbitrarily many stocks. Thus, a cardinality constraint is often imposed when optimizing portfolios \citep{IHSYFKK18,ZZWLZ20}. In this situation, each receiver's action is a combination of a limited number of stocks. 

More complicated combinatorial constraints can appear in other applications. 
For example, let us imagine that a boss (sender) asks a subordinate (receiver) to create a committee by choosing a representative for each of several groups.
An action of the receiver is a set of people that contains a single person from each group.
In another example, the receiver is a company that constructs an electrical grid connecting all electricity consumers, and the sender is a government that requests the construction. 
An action of the receiver is a tree in a network that covers all electricity consumers.

\subsection{Our Results}
We study Bayesian persuasion with combinatorial actions, each of which is a subset of a finite set, denoted by $E$, and satisfies some combinatorial constraints.  
We call each component of $E$ an \textit{element}, i.e., an action is a combination of some elements. 
Utility functions of the sender and the receiver are set functions defined on $E$. 
The main issue of this setting is that there may exist exponentially many actions in $|E|$. 
We aim to clarify under what conditions we can/cannot compute (approximately) optimal signaling schemes efficiently with respect to input sizes, including $|E|$. 

First, we prove that it is NP-hard to achieve constant-factor approximation even in some special cases of matroid or path constraints.
For partition matroid constraints, we utilize a hardness result studied in public Bayesian persuasion with no externalities \citep{DX17}.
For other combinatorial constraints, including uniform matroid constraints, graphic matroid constraints, and path constraints, we construct reductions from an existing hard problem called \textsc{LINEQ-MA} \citep{GR09}, 
which asks to determine whether a large fraction of linear equations can hold or even a small fraction cannot hold.
These reductions have connections to the NP-hardness proof for the \textsc{Opt-Signal} problem \citep{CCMG20} and, as a by-product of our result on partition matroids, we obtain a hardness result for the \textsc{Opt-Signal} problem.

Next, we develop a polynomial-time algorithm for general matroid constraints by assuming the number of states of nature to be a constant (we explain what states and nature are in \Cref{sec:pre}).
We formulate the problem as an exponentially large linear program (LP) and then show that its size can be reduced by enumerating only relevant variables and constraints. 
To enumerate them efficiently, we utilize a computational-geometric algorithm. 
For some special matroids, we show that the LP size can be further reduced. 

Finally, we consider a relaxed persuasiveness condition called \textit{CCE-persuasiveness} \citep{Xu20,CCG20}, 
which is based on the concept of Bayes coarse correlated equilibria. 
We provide a sufficient condition under which a polynomial-time approximation algorithm exists. 
Specifically, if we can approximately maximize a sum of the sender's and receiver's utilities so that only the sender's has an approximation factor, we can compute approximately optimal CCE-persuasive schemes in polynomial time. 
This result is applicable to important cases where the sender's utility is a monotone submodular function.

\subsection{Related Work}
\citet{KG11} proposed the original Bayesian-persuasion model. 
Utilizing the model, researchers analyzed various social situations, including voting \citep{Sch15,AC16}, financial sector stress tests \citep{GL18}, financial markets \citep{DDZ17}, routing \citep{BCKS16}, auctions \citep{DIR14}, and information spread \citep{AB19}.
On the other hand, various algorithms have been developed for Bayesian persuasion with additional settings: exponentially many states of nature \citep{DX16}, multiple receivers \citep{AB19}, secretary problem \citep{HHS20s}, prophet inequalities \citep{HHS20p}, and payment \citep{DNPW19}.

The closest setting to ours is Bayesian persuasion with no externalities \citep{BB17,DX17,AB19,Xu20,CMCG21}.
In this setting, there are multiple receivers, and each one selects a binary action from $\{0, 1\}$ to maximize his own utility, which does not depend on the other receivers' actions. 
This setting can be decomposed into two subclasses: private and public. 
The former supposes that the sender can send a signal to each receiver privately, 
while the latter supposes that the sender sends a public signal to all the receivers. 
As we prove in \Cref{sec:hardness}, public Bayesian persuasion with no externalities is equivalent to our setting with a special partition matroid constraint. 
Our study provides hardness results for other types of constraints and algorithms for general matroid constraints, thus going beyond the existing results on public Bayesian persuasion.

Some existing studies introduced constraints on Bayesian persuasion. 
A series of studies \citep{DIR14,DKQ16,GHHS21} considered a setting where the sender can use only a limited number of bits for signaling.
\citet{BTZ21} considered ex ante and ex post constraints, which restrict structures of posterior distributions induced by the sender's signals. 
Note that the constraints considered in those studies are imposed on the sender's signaling scheme, while we consider combinatorial constraints on the receiver's possible actions.

\section{Problem Settings and Preliminaries}\label{sec:pre}
This section provides problem settings of Bayesian persuasion with combinatorial actions. 
We first describe the basic setting, where sender and receiver maximize their utility functions and the receiver's actions are represented by a matroid. 
We then present some basics and examples of matroids. 
We finally describe the setting with path constraints, where sender and receiver minimize their cost functions.

\paragraph{Notation.}
We denote the set of non-negative reals by $\bbR_{\ge 0}$.
For any set $X$, let $\Delta_X \coloneqq \{ p \colon X \to \bbR_{\ge 0} \mid \sum_{i \in X} p(i) = 1\}$ be the probability simplex over $X$.

\subsection{Basic Setting}
Let $\Theta$ be the family of states of nature. 
A state $\theta\in\Theta$ is drawn from a distribution $\mu \in \Delta_\Theta$, which is common knowledge shared by the sender and the receiver.
In the remainder of this paper, we assume $\mu(\theta) > 0$ for all $\theta \in \Theta$.
If this does not hold, we can remove such $\theta$ from the problem. 
The sender has the informational advantage of knowing the realized state $\theta$, while the receiver cannot access $\theta$ directly. 

After observing the state of nature $\theta$, the sender sends a signal $\sigma \in \Sigma$ to the receiver.
The signals can be randomized, that is, the sender can select any distribution $\phi_\theta \in \Delta_\Sigma$ for each $\theta \in \Theta$ and send signal $\sigma$ randomly sampled from $\phi_\theta$ depending on the observed state $\theta$.
A tuple $(\phi_\theta)_{\theta \in \Theta}$ of probability distributions over $\Sigma$ is called a \textit{signaling scheme}.
The core assumption of Bayesian persuasion is the \textit{commitment assumption}, which compels the sender to publicly commit to a signaling scheme before observing the state of nature $\theta$.
Under this assumption, the process goes on as follows:  
the sender publicly declares a signaling scheme $(\phi_\theta)_{\theta \in \Theta}$, 
then observes the state of nature $\theta$ sampled from $\mu$, 
and then sends signal $\sigma$ to the receiver following the distribution $\phi_\theta$.

Let $E$ be a finite set of $n$ elements. A combinatorial action is a combination of elements in $E$. 
Let $\calI\subseteq 2^E$ be the set of all possible combinatorial actions the receiver can take. 
After observing signal $\sigma$, the receiver takes a combinatorial action $S \in \calI$. 
As a result of this action, the sender and the receiver obtain utility values, which are specified by non-negative set functions $s_\theta \colon 2^E \to \bbR_{\ge 0}$ and $r_\theta \colon 2^E \to \bbR_{\ge 0}$, respectively, for each state of nature $\theta \in \Theta$.
When receiving signal $\sigma\in\Sigma$, the receiver's belief on the state of nature is represented by the posterior distribution $\xi_\sigma \in \Delta_\Theta$ such that
\begin{equation*}
	\xi_\sigma(\theta) = \frac{\mu(\theta) \phi_\theta(\sigma)}{\sum_{\theta' \in \Theta} \mu(\theta') \phi_{\theta'}(\sigma)}.
\end{equation*}
The receiver takes the best response according to this posterior distribution, that is, 
he selects $S_\sigma^*\in\calI$ such that 
\begin{equation*}
	S_\sigma^* \in \argmax_{S \in \calI} \sum_{\theta \in \Theta} \xi_\sigma(\theta) r_\theta(S).
\end{equation*}
If there are multiple best responses, we assume that ties are broken in favor of the sender.
Consequently, the sender, who commits to the signaling scheme $(\phi_\theta)_{\theta \in \Theta}$, 
obtains the following payoff in expectation: 
\begin{equation*}
	\sum_{\sigma \in \Sigma} \sum_{\theta \in \Theta} \mu(\theta) \phi_\theta(\sigma)  s_\theta(S^*_\sigma).
\end{equation*}

The revelation principle \citep{KG11} claims that to consider only direct and persuasive signaling schemes is sufficient.
A direct signaling scheme associates each signal with an action, i.e., $\Sigma = \calI$, and recommends action $S \in \calI$ by sending the corresponding signal $S \in \Sigma$.
A direct signaling scheme is \textit{persuasive} if the receiver has no incentive to deviate from recommendation $S \in \calI$ when receiving signal $S \in \Sigma$. 
The persuasiveness constraint requires signaling scheme $(\phi_\theta)_{\theta\in\Theta}$ to satisfy $\sum_{\theta \in \Theta} \xi_S(\theta) r_\theta(S) \ge \sum_{\theta \in \Theta} \xi_S(\theta) r_\theta(S')$ for every pair $S, S' \in \calI$, where $\xi_S$ is the posterior distribution when the receiver observes $S\in\Sigma$.

By considering the revelation principle, we can formulate the problem of computing an optimal signaling scheme as
\begin{equation}\label{eq:lp}
	\begin{alignedat}{4}
	&\text{maximize} &\ \ &\sum_{\theta \in \Theta} \mu(\theta) \sum_{S \in \calI} \phi_\theta(S) s_\theta(S)\\
	&\text{subject to} & &\sum_{\theta \in \Theta} \mu(\theta) \phi_\theta(S) \left( r_\theta(S) - r_\theta(S') \right) \ge 0& \quad & (S, S' \in \calI)\\
	& & &\phi_\theta \in \Delta_\calI & \quad & (\theta \in \Theta).
	\end{alignedat}
\end{equation}
The first constraint is the persuasiveness constraint, under which $S$ must be one of the receiver's best responses when the sender sends signal $S$.
The second constraint requires $\phi_\theta$ to be a probability distribution in $\Delta_\calI$ for each $\theta\in\Theta$.

In general, the utility set functions $(s_\theta)_{\theta \in \Theta}$ and $(r_\theta)_{\theta \in \Theta}$ do not have a polynomial-size representation in $|E|$ and $|\Theta|$.
In such cases, we assume that we have value oracles, i.e., 
$s_\theta(S)$ and $r_\theta(S)$ values are available for any given $\theta \in \Theta$ and $S \subseteq E$.
Similarly, if the set $\calI$ of the receiver's actions does not have a polynomial-size representation, we assume access to an independence oracle, which returns whether $S \in \calI$ or not for any given $S \subseteq E$. 
These assumptions are common in combinatorial optimization. 

For the hardness results in \Cref{sec:hardness}, we let $(s_\theta)_{\theta \in \Theta}$ and $(r_\theta)_{\theta \in \Theta}$ be linear functions, i.e., $s_\theta(S) = \sum_{i \in S} s_\theta(\{i\})$ and $r_\theta(S) = \sum_{i \in S} r_\theta(\{i\})$ for all $S \subseteq E$ and $\theta \in \Theta$. 
Such $(s_\theta)_{\theta \in \Theta}$ and $(r_\theta)_{\theta \in \Theta}$ have polynomial-size representations. 
We also use sets $\calI$ of actions that have polynomial-size representations. 
Thus, our hardness results indeed come from computational hardness, not from the hardness of representing problems. 
For the polynomial-time algorithms in \Cref{sec:exp}, we assume $(r_\theta)_{\theta \in \Theta}$ to be linear but allow $(s_\theta)_{\theta \in \Theta}$ to be a general set function.
For the CCE-persuasiveness result in \Cref{sec:cce}, we allow both $(s_\theta)_{\theta \in \Theta}$ and $(r_\theta)_{\theta \in \Theta}$ to be general set functions, while we impose an approximability assumption as detailed later. 

\subsection{Matroids}
Many of our results consider combinatorial constraints represented by matroids, 
which are useful to model various combinatorial actions that appear in practice. 

A pair $(E, \calI)$ of a finite set $E$ and a non-empty set family $\calI \subseteq 2^E$ is called a matroid if the following conditions hold:
\begin{itemize}
	\item $S \subseteq T \in \calI$ implies $S \in \calI$.
	\item For any $S, T \in \calI$ with $|S| < |T|$, there exists $i \in T \setminus S$ such that $S \cup \{i\} \in \calI$.
\end{itemize}
A set $S \subseteq E$ is called independent if $S \in \calI$ holds.
Given a matroid $(E, \calI)$ and a non-negative weight $w(i) \in \bbR_{\ge 0}$ for each element $i \in E$, 
a maximum weight independent set $S^* \in \argmax_{S \in \calI} \sum_{i \in S} w(i)$ can be found by using the greedy algorithm, 
which starts with the empty set and, in descending order of weights, adds elements that maintain independence.
Below, we introduce some special matroids that are useful in practical scenarios.

\paragraph{Uniform matroids.}
A uniform matroid is a matroid with $\calI = \{ S \subseteq E \mid |S| \le k \}$ for some integer $k > 0$.
Under a uniform matroid constraint, the receiver selects at most $k$ elements that yield the largest expected utility value.
This constraint fits the portfolio optimization scenario, where the receiver selects $k$ stocks expected to yield the largest return. 

\paragraph{Partition matroids.}
Let $E_1, \dots, E_P \subseteq E$ be a partition, i.e., $E_1 \cup \cdots \cup E_P = E$ and $E_i \cap E_j = \emptyset$ for every distinct $i,j \in [P]$. 
Assign some positive integer $k_i$ for each $i \in [P]$. 
A matroid with $\calI = \{ S \subseteq E \mid \forall i \in [P], ~ |S \cap E_i| \le k_i \}$ 
is called a partition matroid. 
Under a partition matroid constraint, the receiver selects at most $k_i$ elements from the $i$th partition.
This can model a scenario where the receiver creates a committee by choosing one person from each group.

\paragraph{Graphic matroids.}
Given an undirected graph $(V, E)$, a graphic matroid $(E, \calI)$ is a matroid with $\calI = \{ S \subseteq E \mid \text{$S$ does not contain a cycle}\}$.
Any maximal independent set of a graphic matroid forms a spanning tree (or a spanning forest if graph $(V, E)$ is not connected).
If $E$ is a set of all edges that can be used for constructing an electrical grid and $V$ is a set of all electricity consumers, 
a graphic matroid constraint models the situation where the receiver constructs an electrical grid covering all consumers.

\subsection{Minimization Setting with Path Constraints}
Given a directed graph $G = (V, E)$ with origin $v_s \in V$ and destination $v_t \in V$, 
we consider a setting where the receiver selects a $v_s$--$v_t$ path $S \subseteq E$ according to a signal from the sender. 
This setting appears, for example, when the sender is an association that manages traffic by recommending a route, and the receiver is a taxi driver.
In this setting, the sender and the receiver usually aim to minimize their costs rather than maximize utility functions. 
Thus, we focus on the minimization setting regarding path constraints 
(we distinguish maximization and minimization settings since some of our results 
concerning approximability cannot be translated from one to the other). 
The problem of computing an optimal signaling scheme can be formulated as a minimization version of \eqref{eq:lp}, 
where $\calI$ is the set of all $v_s$--$v_t$ paths. 
The receiver selects a path that minimizes his expected cost based on the posterior distribution; 
therefore, we reverse the inequality sign of the persuasiveness constraint. 
The sender's goal is to minimize her expected cost.

\section{Hardness Results}\label{sec:hardness}
We show the NP-hardness of obtaining a constant-factor approximation for Bayesian persuasion 
with partition matroid, uniform matroid, graphic matroid, and path constraints. 
Throughout this section, we assume the utility functions of the sender and the receiver to be linear. 

\subsection{Partition Matroid Constraints}\label{subsec:partition-matroids-hardness}
We show that Bayesian persuasion with a partition matroid constraint is intractable even in a special case where each partition has two elements. We prove this via a reduction from public Bayesian persuasion with no externalities.
\citet{DX17} proved that constant-factor approximation for public Bayesian persuasion with no externalities is NP-hard even if we restrict the sender's utility function to $s_\theta(S) = |S|$ for each $\theta \in \Theta$.
By associating each partition with a receiver of public Bayesian persuasion, we obtain the reduction, thus proving the following hardness result.

\begin{restatable}{theorem}{thmhardnesspartition}\label{thm:hardness-partition}
	For any constant $\alpha \in (0, 1]$, it is NP-hard to compute an $\alpha$-approximate solution for Bayesian persuasion with a partition matroid constraint whose partitions are of size two.
\end{restatable}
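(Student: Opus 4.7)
The plan is to reduce from public Bayesian persuasion with no externalities (PBPNE), for which \citet{DX17} show that constant-factor approximation is NP-hard even when $s_\theta(S)=|S|$ for every $\theta$. The key idea is to identify each of the $m$ PBPNE receivers with a size-two block of the partition matroid, so that the two ground elements in the block encode the two binary actions available to that receiver.

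Concretely, given a PBPNE instance with individual receiver utilities $u_i^\theta(0),u_i^\theta(1)\ge 0$ and prior $\mu$, I would build a ground set $E=\bigsqcup_{i=1}^m\{a_i,b_i\}$ with partition matroid $\calI=\{S\subseteq E : |S\cap\{a_i,b_i\}|\le 1 \text{ for all } i\}$, and linear utilities $r_\theta(a_i)=u_i^\theta(1)+C$, $r_\theta(b_i)=u_i^\theta(0)+C$, $s_\theta(a_i)=1$, $s_\theta(b_i)=0$, where $C>0$ is a constant chosen large enough that the receiver strictly prefers to select some element in each block over leaving it empty. The two instances are then linked by the natural bijection sending an action profile $(x_1,\dots,x_m)\in\{0,1\}^m$ of PBPNE to the matroid action $\{a_i : x_i=1\}\cup\{b_i : x_i=0\}\in\calI$.

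The central step is to verify that persuasive direct schemes on the two sides correspond under this bijection and that sender payoffs match. Because $r_\theta$ is linear and $\calI$ decomposes across blocks, the receiver's best response to any posterior also decomposes: for each block the receiver independently picks the element of larger posterior-expected utility, and the offset $C$ rules out picking nothing. Hence matroid persuasiveness of a recommendation $S$ collapses to the $m$ independent per-receiver persuasiveness conditions of PBPNE, while $s_\theta(S)$ equals the number of receivers choosing action $1$. The step I expect to be the main subtlety is exactly this collapse: checking that the exponentially many matroid persuasiveness inequalities (one per $S'\in\calI$) reduce to per-block conditions uses both the additive structure of $r_\theta$ and the fact that the offset $C$ forces every on-support signal to contain exactly one element per block. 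Once this equivalence is in place, the reduction is polynomial-time and preserves approximation ratios, so an $\alpha$-approximation in the matroid setting would yield an $\alpha$-approximation for PBPNE, contradicting \citet{DX17}.
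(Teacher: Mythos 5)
Your reduction is essentially the paper's: identify each PBPNE receiver with a size-two partition block, set the receiver's per-element utilities from the PBPNE receiver utilities, let the sender's utility count the number of "action-1" elements, and observe that linearity of $r_\theta$ plus the block structure makes the receiver's best response decompose block by block. The one place you diverge is the additive offset $C$ to rule out empty selections within a block; the paper does not need this because all utilities are assumed non-negative (so selecting an element is never worse than selecting nothing) and ties are broken in favor of the sender (so the receiver selects $a_{i,1}$ whenever indifferent). Your offset is harmless since adding the same constant to both elements in a block does not change any comparison, but it is superfluous under the paper's conventions.
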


The proof and the details of public Bayesian persuasion with no externalities are provided in \Cref{sec:hardness-partition}.

\begin{remark}
	We can also construct a reduction for the opposite direction. Therefore, we establish an equivalence in terms of the approximability between public Bayesian persuasion with no externalities and Bayesian persuasion with a partition matroid constraint whose partitions are of size two.
\end{remark}

As a by-product of \Cref{thm:hardness-partition}, we can obtain a hardness result on the \textsc{Opt-Signal} problem \citep{CCMG20}, which is a variant of Bayesian persuasion where the receiver's utility depends on a random type that is unknown to the sender.
We provide the details in \Cref{sec:hardness-optsignal}.

\subsection{Uniform Matroid Constraints}
We prove the NP-hardness of constant-factor approximation for Bayesian persuasion with a uniform matroid constraint.
Our proof is inspired by that of the \textsc{Opt-Signal} problem \citep{CCMG20}, 
which is equivalent to the special case of partition matroid constraints, as mentioned above. 
Note that a uniform matroid cannot be represented by the special partition matroid with size-two partitions. 
Nevertheless, we can use a similar proof strategy to the one used for the \textsc{Opt-Signal} problem.
Specifically, we construct a reduction from a variant of an existing hard problem called \textsc{LIENQ-MA} \citep{GR09}, which asks us to distinguish whether a linear system $A x = c$ has a solution satisfying most of the equations or no solution satisfies even a small fraction of the equations.
The definition of this problem and the proof of the following theorem are provided in \Cref{sec:hardness-uniform}.

\begin{restatable}{theorem}{thmhardnessuniform}\label{thm:hardness-uniform}
	For any constant $\alpha \in (0, 1]$, it is NP-hard to compute an $\alpha$-approximate solution for Bayesian persuasion with a uniform matroid constraint.
\end{restatable}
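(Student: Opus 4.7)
The plan is to reduce from \textsc{LINEQ-MA}, adapting the strategy used by \citet{CCMG20} for the \textsc{Opt-Signal} problem to the uniform matroid setting. Recall that \textsc{LINEQ-MA} is the promise problem of deciding, given a linear system $Ax = c$ over a finite field $\mathbb{F}_q$, whether some assignment satisfies at least a $(1-\epsilon)$-fraction of the equations or every assignment satisfies at most an $\epsilon$-fraction; this is NP-hard for arbitrarily small constant $\epsilon > 0$.

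Given such a \textsc{LINEQ-MA} instance with $n$ variables and $m$ equations, I construct a Bayesian-persuasion instance as follows. The state space is $\Theta = \{\theta_1,\dots,\theta_m\}$, one state per equation, with the uniform prior $\mu$. The ground set is $E = \{e_{i,v} : i \in [n], v \in \mathbb{F}_q\}$, one element per variable--value pair, organized into $n$ gadgets $E_i = \{e_{i,v} : v \in \mathbb{F}_q\}$. The feasibility constraint is the uniform matroid of rank $n$, i.e., $\calI = \{S \subseteq E : |S| \le n\}$. I then define linear utilities $r_\theta$ and $s_\theta$ so that (i) every receiver best response $S$ picks exactly one element from each gadget and hence encodes an assignment $x(S) \in \mathbb{F}_q^n$, and (ii) the sender's payoff under signal $S$ and state $\theta$ captures whether $x(S)$ satisfies equation $\theta$, up to a fixed normalization. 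Concretely, $r_\theta(\{e_{i,v}\})$ consists of a large uniform base term plus a small state-dependent perturbation that biases the choice within gadget $E_i$ toward values appearing favorably in equation $\theta$, while $s_\theta(\{e_{i,v}\})$ is set to isolate each variable's contribution to equation $\theta$.

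With (i) and (ii) in place, the analysis mirrors the \textsc{Opt-Signal} reduction. For completeness, if some assignment $x^*$ satisfies a $(1-\epsilon)$-fraction of equations, the scheme that always recommends the corresponding $S^* = \{e_{i,x^*_i}\}_{i \in [n]}$ is persuasive by construction of $r_\theta$ and attains sender value at least $1-\epsilon$. For soundness, property (i) lets me view any persuasive scheme as a distribution, conditional on each signal, over assignments; the NO-case guarantee that every assignment satisfies at most an $\epsilon$-fraction of equations, combined with averaging over signals, bounds the sender value by $\epsilon$ plus negligible error. Choosing $\epsilon$ small enough compared with $\alpha$ yields the claimed NP-hardness of $\alpha$-approximation.

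The main obstacle is establishing property (i) under the combination of a purely cardinality-based matroid and linear utilities. Unlike the partition matroid case used in \Cref{thm:hardness-partition}, nothing in the uniform matroid forbids selecting two elements from one gadget $E_i$ and none from another $E_j$, and linearity means the uniform base contributes the same total to every $n$-element selection. Hence the gadget-balancing effect must come entirely from the perturbations, which must be small enough not to disturb the state-dependent persuasiveness analysis yet large enough to rule out every ill-formed best response under every feasible posterior. Calibrating these competing requirements -- and verifying that the resulting reduction faithfully preserves the LINEQ-MA gap -- is the delicate technical step that sets this proof apart from the partition matroid argument.
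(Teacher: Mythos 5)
Your proposed reduction cannot work as stated, and you have correctly identified the reason yourself. You want the receiver's best response to encode an assignment by picking exactly one element from each gadget $E_i$, but as you note, a uniform matroid of rank $n$ together with linear utilities gives no mechanism to forbid the receiver from selecting two elements from one gadget and zero from another: the ``uniform base term'' contributes the same total to every $n$-element set, and the state-dependent perturbations within a gadget can only reorder elements, not enforce cross-gadget balance. There is no choice of perturbation magnitudes that fixes this; the difficulty is not a calibration issue but a structural one. Enforcing ``one per block'' is exactly what a partition matroid does and a uniform matroid does not, so property~(i) is simply false under a uniform matroid constraint, and the soundness argument built on viewing each signal as inducing a distribution over assignments collapses.

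The paper's actual reduction takes a different tack that sidesteps this entirely. It sets $\Theta = \{0,1,\dots,\nvar\}$ (one state per \emph{variable}, plus a dummy state $0$), the ground set $E = [\nequ]\times\{a_0,a_1,a_2\}$ (a triple of elements per \emph{equation}), and a uniform matroid of rank $\nequ$. The receiver's utilities are chosen so that, for a given posterior $\xi_\sigma$, each triple $\{(t,a_0),(t,a_1),(t,a_2)\}$ has one element with expected utility exactly $1$ ($a_0$), and the other two have expected utilities summing to $2$, symmetric around $1$. Consequently exactly one element of the triple exceeds $1$ when equation $t$ is ``violated'' by $\xi_\sigma$, and none exceeds $1$ when it is ``satisfied.'' Crucially, the LINEQ-MA assignment is \emph{not} read off the receiver's selected set; it is read off the posterior distribution $\xi_\sigma$ itself via $\bar{x}_\theta = \xi_\sigma(\theta)$. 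The receiver's top-$\nequ$ selection merely reveals how many equations the posterior satisfies, because the number of strictly-above-$1$ elements equals the number of violated equations, and the sender's payoff counts the remaining $a_0$ picks. This removes any need for the selection to be gadget-balanced, which is precisely the obstruction that sinks your plan. If you want to keep your proposal's identification of selections with assignments, you would need a partition matroid, which lands you back in \Cref{thm:hardness-partition}.
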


\subsection{Graphic Matroid Constraints}
The hardness result for Bayesian persuasion with graphic matroid constraints can be proved in a similar way to that of uniform matroid constraints, i.e., we construct a reduction from the variant of the \textsc{LINEQ-MA} problem.
We provide the proof in \Cref{sec:hardness-graphic}.

\begin{restatable}{theorem}{thmhardnessgraphic}\label{thm:hardness-graphic}
	For any constant $\alpha \in (0, 1]$, it is NP-hard to compute an $\alpha$-approximate solution for Bayesian persuasion with a graphic matroid constraint.
\end{restatable}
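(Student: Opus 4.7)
The plan is to construct a gap-preserving reduction from the variant of \textsc{LINEQ-MA} that underlies \Cref{thm:hardness-uniform}, mimicking the strategy of that proof. Given an instance with $m$ linear equations in $n$ variables over $\mathbb{F}_q$, I would build a graph $G=(V,E)$ whose graphic matroid encodes assignments: take $V=\{v_0,v_1,\dots,v_n\}$, and for each $j\in[n]$ insert $q$ parallel edges $e_{j,v}$ between $v_{j-1}$ and $v_j$, indexed by $v\in\mathbb{F}_q$. Any independent set of this graphic matroid picks at most one edge from each bundle (two parallel edges form a cycle), and any basis picks exactly one edge per bundle; bases are thus in bijection with assignments $x\in\mathbb{F}_q^n$ via the rule $x_j=v$ iff $e_{j,v}\in S$.

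Next, I would set up the persuasion instance with one state $\theta_i\in\Theta$ per equation under uniform prior $\mu(\theta_i)=1/m$, and port over the linear utilities $r_{\theta_i}(e_{j,v})$ and $s_{\theta_i}(e_{j,v})$ essentially verbatim from the uniform-matroid construction: the receiver is rewarded for picking edges whose values contribute to satisfying equation $i$, while the sender is rewarded when the recommended basis encodes an assignment consistent with that equation. Crucially, the graphic matroid on this graph decomposes as the direct sum of $n$ rank-one matroids, one per bundle, so the receiver's best response under any posterior factorizes bundle-by-bundle. This reduces the best-response analysis and the persuasiveness constraint to the per-variable argument already carried out in the uniform-matroid case, which I would reuse without modification.

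To close the reduction, I would verify the approximation gap. In the YES case, where some assignment $x^\star$ satisfies a $(1-\epsilon)$ fraction of the equations, the scheme that, under each state $\theta_i$, recommends a spanning tree encoding an assignment satisfying equation $i$ (falling back on $x^\star$ otherwise) is persuasive and yields sender value close to $1$. In the NO case, where no assignment satisfies more than a $\delta$ fraction, I would bound the value of any persuasive scheme by roughly $\delta$, using the per-bundle best-response decomposition to argue that each realized recommendation must correspond to some fixed assignment, none of which can cover more than a $\delta$ fraction of the states. Choosing \textsc{LINEQ-MA} parameters so that $\delta/(1-\epsilon)<\alpha$ then rules out $\alpha$-approximation for any fixed $\alpha\in(0,1]$.

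The main obstacle is the NO-case upper bound: I must rule out the possibility that the sender extracts value exceeding $\delta$ via correlated, state-dependent mixtures of spanning trees. The per-bundle decomposition of both the graphic matroid and the receiver's linear utilities—arising precisely because parallel edges are the only short cycles in this graph—makes this step structurally identical to its uniform-matroid analogue; only the combinatorial wrapper changes from \emph{picking at most $k$ elements under a uniform matroid} to \emph{picking one edge per bundle to form a spanning tree}. Once that structural equivalence is made explicit, the remaining arithmetic for the gap parameters is routine bookkeeping inherited from \Cref{thm:hardness-uniform}.
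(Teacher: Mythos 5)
Your construction inverts the roles of states and ground-set elements relative to the uniform-matroid template you are trying to reuse, and this breaks the mechanism that makes that template work. In the paper's uniform-matroid proof (and in its graphic-matroid proof), the states $\Theta=\{0,1,\dots,\nvar\}$ index the \emph{variables} of the \textsc{LINEQ-MA} system (plus a dummy state), and the ground-set elements are indexed by \emph{equations}; the engine of the reduction is that the receiver's posterior $\xi$ over $\Theta$ \emph{is} the candidate assignment $\bar x$, and the receiver's expected value of the elements associated with equation $t$ reads off $1\pm\big(\bar c_t-\sum_\theta\xi(\theta)\bar A_{t\theta}\big)$, so the per-equation best response reveals whether equation $t$ is satisfied by the assignment encoded in the posterior. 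You instead put states on equations and elements on (variable, value) pairs, so that the \emph{spanning tree}, not the posterior, encodes the assignment. But then the per-bundle decomposition of the graphic matroid only means the receiver picks, for each variable $j$, the value $v$ maximizing $\sum_i\xi(\theta_i)\,r_{\theta_i}(e_{j,v})$; equation satisfaction is a \emph{global} property of the vector $(v_1,\dots,v_n)$ and cannot be made to line up with such bundle-local linear comparisons. Concretely, there is no choice of per-edge values $r_{\theta_i}(e_{j,v})$ for which "the receiver's best response encodes an assignment satisfying a large fraction of equations" follows from a bound on the posterior, so the utilities cannot be "ported verbatim," the YES-case scheme you sketch is not obviously persuasive, and the NO-case upper bound (which you correctly flag as the crux) has no clear route.

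The paper sidesteps all of this with a different gadget that keeps the posterior-as-assignment mechanism intact: the graph is a disjoint union of $\nequ$ copies of $K_4$, one per equation $t$, with vertices $v_{t,0},\dots,v_{t,3}$. Edge $\{v_{t,1},v_{t,2}\}$ gets a huge weight $L$ and is therefore in every best-response spanning tree; the remaining weights are set so that, under posterior $\xi$ (interpreted as the assignment $\bar x$), the receiver's optimal way to connect $v_{t,0}$ and $v_{t,3}$ uses the sender-rewarding edge $\{v_{t,0},v_{t,1}\}$ if and only if $\sum_\theta\xi(\theta)\bar A_{t\theta}=\bar c_t$. This is a genuinely \emph{per-equation} gadget, not a per-variable one, which is exactly what the \textsc{LINEQ-MA} gap requires. (A secondary mismatch: the paper uses the binary-/rational-vector variant of \textsc{LINEQ-MA} from \citet{GR09}, not a finite-field version, so the $\mathbb{F}_q$ parallel-edge bundles do not align with the hard problem being reduced from.)
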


\subsection{Path Constraints}
Recall that in this setting, the sender and the receiver aim to minimize their expected costs, 
while the above settings consider maximizing utilities. 
Thus, the approximation ratio for this problem is lower-bounded by $1$, and a smaller value implies a better approximation. 
The proof is similar to those for uniform or graphic matroids (see \Cref{sec:hardness-path}).

\begin{restatable}{theorem}{thmhardnesspath}\label{thm:hardness-path}
	For any constant $\alpha \in [1, \infty)$, it is NP-hard to compute an $\alpha$-approximate solution for Bayesian persuasion with a path constraint.
\end{restatable}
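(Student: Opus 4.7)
The plan is to build a reduction from the same variant of \textsc{LINEQ-MA} used for the uniform and graphic matroid hardness results in \Cref{sec:hardness-uniform,sec:hardness-graphic}, adapted to the minimization‐with‐paths setting. Given a linear system $Ax=c$ over a small field, the idea is to encode variables as gadgets along a directed path and equations as states of nature, so that a $v_s$–$v_t$ path is in bijection with a variable assignment and a signaling scheme plays the role of a ``distribution over assignments, indexed by equations.''

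First, I would construct a layered DAG as follows. Place vertices $v_s = u_0, u_1, \dots, u_n = v_t$ in series, where $n$ is the number of variables of the \textsc{LINEQ-MA} instance. Between consecutive vertices $u_{i-1}$ and $u_i$, put parallel edges, one for each possible value of variable $x_i$. Every $v_s$–$v_t$ path then corresponds to a unique assignment $x \in \{0,1,\dots\}^n$, and the set $\calI$ of paths has a polynomial-size implicit description (independence is checked by verifying that exactly one edge is chosen per layer, which is trivial in this graph). Take the prior $\mu$ to be uniform over $\Theta$, where states of nature index the equations of $Ax=c$.

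Second, I would define the receiver's edge costs so that, for the state $\theta$ corresponding to equation $a_\theta^\top x = c_\theta$, the cost of a path equals a fixed baseline plus a penalty $M$ whenever the corresponding assignment violates equation $\theta$. Concretely, assign cost $0$ to the edge representing value $v$ of variable $i$ in state $\theta$ if that choice is consistent with satisfying the equation (conditioned on the other variables' contributions being resolved by a separate bookkeeping gadget), and a positive cost calibrated so that the total path cost under state $\theta$ cleanly distinguishes ``path satisfies equation $\theta$'' from ``path violates equation $\theta$.'' The sender's cost function is then designed to strictly prefer one canonical path (say, the one encoding $x=0$), and to scale linearly with deviations, so that persuading the receiver to follow any non‐canonical path incurs a controlled cost to the sender proportional to the fraction of equations thereby satisfied.

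Third, I would verify the two‐sided gap. In the yes instance of \textsc{LINEQ-MA}, there exists $x^\ast$ satisfying a $(1-\epsilon)$-fraction of the equations; pooling all the satisfied states onto the signal that recommends the path encoding $x^\ast$, and handling the remaining small fraction by a fully revealing signal, yields a persuasive scheme with small expected sender cost. In the no instance, every path satisfies at most an $\epsilon$-fraction of equations, so \emph{any} persuasive scheme must either induce many different receiver paths (each expensive for the sender) or recommend a single path that is massively violated on most states (also expensive, by construction of the sender cost); this forces the sender's expected cost to remain bounded below by a constant factor larger than in the yes case. To amplify the ratio to an arbitrary constant $\alpha \ge 1$, one composes $k$ independent copies of the gadget in series along the path and takes tensor products of the states, which multiplies the cost gap while keeping the construction polynomial in size. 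The main obstacle is this calibration step: choosing the penalty $M$, the sender's cost weights, and the number of copies $k$ so that the persuasiveness constraints are tight on the ``intended'' recommendations, ties are broken correctly in the sender's favor, and the multiplicative gap between the yes and no cases survives amplification — exactly the same delicate balancing performed in the uniform and graphic matroid proofs, transplanted to the minimization setting where the sender's cost is being minimized rather than a utility maximized.
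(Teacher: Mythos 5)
Your reduction runs in a genuinely different direction from the paper's, and there is a gap in it that I don't think can be patched. You encode variables as layers along the path and equations as states of nature, so that a $v_s$--$v_t$ path is a candidate assignment $x$ and the receiver's cost in state $\theta$ should record whether $x$ violates equation $\theta$. The difficulty is that $\mathbb{1}[a_\theta^\top x \neq c_\theta]$ is not a separable function of the per-layer choices, but the receiver's cost must be a \emph{linear} edge-weighted path cost (this is the standing assumption in \Cref{sec:hardness}, and it is what makes the receiver's best response a tractable shortest-path problem). Your ``bookkeeping gadget'' would have to turn a non-separable indicator into an edge-additive cost; the standard fix---enlarging the vertex set to carry the running partial sum $\sum_{j\le i} a_{\theta,j}x_j$---either blows up the graph beyond polynomial size or forces the graph topology to depend on $\theta$, neither of which is allowed. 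Without this gadget your completeness/soundness argument does not get off the ground, since the receiver's best response in your construction is not even well defined as a shortest path.

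The paper's construction (\Cref{sec:hardness-path}) avoids this entirely by reversing the roles: layers are indexed by \emph{equations} ($t \in [\nequ]$, with a small three-way gadget $\{u_{t,0},u_{t,1},u_{t,2}\}$ per layer) and states of nature by \emph{variables} ($\Theta = \{0,1,\dots,\nvar\}$, with a carefully chosen nonuniform prior). The key insight is that the posterior $\xi_\sigma \in \Delta_\Theta$ itself plays the role of the rational candidate solution $\bar{x} \in \bbQ^{\nvar}$: the receiver's expected cost of the three edges in layer $t$ is $1$, $1+\bar c_t - \sum_\theta \xi_\sigma(\theta)\bar A_{t\theta}$, and $1-\bar c_t + \sum_\theta \xi_\sigma(\theta)\bar A_{t\theta}$, so the receiver is indifferent (and breaks ties toward $u_{t,0}$, which is what the sender wants) exactly when $\xi_\sigma$ satisfies equation $t$. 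Each layer's choice then depends only on $\xi_\sigma$ and is automatically linear in edge costs; no bookkeeping is needed. Finally, the amplification step you append is unnecessary: the parameters $\zeta,\delta$ in $\textsc{LINEQ-MA}(1-\zeta,\delta)$ are free, and choosing them so that $1-\delta > \alpha\zeta(1+\zeta)$ already gives the desired multiplicative gap for any $\alpha \ge 1$ in a single copy of the construction. Needing to amplify is a symptom that your framework doesn't control the gap cleanly in the first place.
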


\section{Polynomial-time Algorithm for Constant Number of States}\label{sec:exp}
We present a polynomial-time algorithm for Bayesian persuasion with general matroid constraints by assuming the number of states of nature to be a constant, i.e., $|\Theta| = O(1)$.
We denote by $d$ the dimension of $\Delta_\Theta$, i.e., $d \coloneqq |\Theta| - 1$.
Throughout this section, we assume the receiver's utility function to be linear, while the sender's is a general set function, 
and we regard computation costs that depend only on $d$ as constants. 

\begin{figure}[tb]
	\centering
	\begin{tikzpicture}[line width=1pt]
		\definecolor{color1}{cmyk}{0,0,1.00,0.52}
		\definecolor{color2}{cmyk}{1.00,0.50,0,0.19}
		\definecolor{color3}{cmyk}{0,0.66,0.66,0.40}
	
		\newlength{\ul}
		\setlength{\ul}{0.3\hsize}
		\coordinate (v1) at (0.0\ul, 0.0\ul);
		\coordinate (v2) at (1.0\ul, 0.0\ul);
		\coordinate (v3) at (0.5\ul, 0.8660254\ul);
		\coordinate (c) at (0.5\ul, 0.3\ul);
		\coordinate (r11) at ([xshift=+0.5\ul, yshift=+0.2\ul]c);
		\coordinate (r12) at ([xshift=-0.6\ul, yshift=-0.24\ul]c);
		\coordinate (r1c) at ($0.95*(r11)+0.05*(r12)$);
		\coordinate (r21) at ([xshift=+0.6\ul, yshift=-0.30\ul]c);
		\coordinate (r22) at ([xshift=-0.5\ul, yshift=+0.25\ul]c);
		\coordinate (r2c) at ($0.05*(r21)+0.95*(r22)$);
		\coordinate (r31) at ([xshift=+0.07\ul, yshift=+0.56\ul]c);
		\coordinate (r32) at ([xshift=-0.065\ul, yshift=-0.52\ul]c);
		\coordinate (r3c) at ($0.05*(r31)+0.95*(r32)$);
		\path [name path=v1--v2] (v1) -- (v2);
		\path [name path=v2--v3] (v2) -- (v3);
		\path [name path=v3--v1] (v3) -- (v1);
		\path [name path=line1] (r11) -- (r12);
		\path [name path=line2] (r21) -- (r22);
		\path [name path=line3] (r31) -- (r32);
	
		\path [name intersections={of=line1 and v2--v3, by=i1}];
		\path [name intersections={of=line2 and v3--v1, by=i2}];
		\path [name intersections={of=line3 and v1--v2, by=i3}];
		\path [line width=5pt, draw=white, fill=color1, opacity=1] (c) -- (i1) -- (v2) --  (i3) --cycle;
		\path [line width=5pt, draw=white, fill=color2, opacity=1] (c) -- (i2) -- (v1) --  (i3) --cycle;
		\path [line width=5pt, draw=white, fill=color3, opacity=1] (c) -- (i1) -- (v3) --  (i2) --cycle;
	
		\draw (v1) -- (v2) -- (v3) -- cycle;
		\draw (r11) -- (r12);
		\draw (r21) -- (r22);
		\draw (r31) -- (r32);
		\draw [-{Stealth}] (r1c) -- ([xshift=+0.05\ul, yshift=-0.10\ul]r1c);
		\draw [-{Stealth}] (r1c) -- ([xshift=-0.05\ul, yshift=+0.10\ul]r1c);
		\node [anchor=north, inner sep=0pt] at ([xshift=+0.25\ul, yshift=-0.125\ul]r1c) {\fontsize{8}{8}\selectfont$\bbE r_\theta(\{1\}) > \bbE r_\theta(\{2\})$};
		\node [anchor=south, inner sep=0pt] at ([xshift=+0.10\ul, yshift=+0.125\ul]r1c) {\fontsize{8}{8}\selectfont$\bbE r_\theta(\{1\}) < \bbE r_\theta(\{2\})$};
		\draw [-{Stealth}] (r2c) -- ([xshift=+0.05\ul, yshift=+0.10\ul]r2c);
		\draw [-{Stealth}] (r2c) -- ([xshift=-0.05\ul, yshift=-0.10\ul]r2c);
		\node [anchor=south, inner sep=0pt] at ([xshift=-0.10\ul, yshift=+0.125\ul]r2c) {\fontsize{8}{8}\selectfont$\bbE r_\theta(\{1\}) < \bbE r_\theta(\{3\})$};
		\node [anchor=north, inner sep=0pt] at ([xshift=-0.20\ul, yshift=-0.125\ul]r2c) {\fontsize{8}{8}\selectfont$\bbE r_\theta(\{1\}) > \bbE r_\theta(\{3\})$};
		\draw [-{Stealth}] (r3c) -- ([xshift=+0.1\ul, yshift=-0.0125\ul]r3c);
		\draw [-{Stealth}] (r3c) -- ([xshift=-0.1\ul, yshift=+0.0125\ul]r3c);
		\node [anchor=west, inner sep=0pt] at ([xshift=+0.125\ul, yshift=-0.020\ul]r3c) {\fontsize{8}{8}\selectfont$\bbE r_\theta(\{2\}) < \bbE r_\theta(\{3\})$};
		\node [anchor=east, inner sep=0pt] at ([xshift=-0.125\ul, yshift=-0.005\ul]r3c) {\fontsize{8}{8}\selectfont$\bbE r_\theta(\{2\}) > \bbE r_\theta(\{3\})$};
		\node [inner sep=1pt, anchor=north east] at (v1) {$\theta_3$};
		\node [inner sep=1pt, anchor=north west] at (v2) {$\theta_2$};
		\node [inner sep=1pt, anchor=south] at (v3) {$\theta_1$};
	
		\node [align=center, text=white] at ($0.25*(c)+0.25*(i1)+0.25*(v2)+0.25*(i3)$) {$\{1,3\}$ is\\the best.};
		\node [align=center, text=white] at ($0.25*(c)+0.25*(i2)+0.25*(v1)+0.25*(i3)$) {$\{1,2\}$ is\\the best.};
		\node [align=center, text=white] at ($0.25*(c)+0.25*(i1)+0.25*(v3)+0.25*(i2)$) {$\{2,3\}$ is\\the best.};
	\end{tikzpicture}
	\caption{%
		An illustration of cells for Bayesian persuasion with a uniform matroid constraint, where the receiver selects the top-$2$ elements from $E = \{1,2,3\}$.
		For each pair of distinct $i, j \in E$, a hyperplane (a line) divides the probability simplex $\Delta_\Theta$ into the region where $\bbE_{\theta \sim \xi} [r_\theta(\{i\})] > \bbE_{\theta \sim \xi} [r_\theta(\{j\})]$ holds and the complement.
		The receiver's best response is identical within each cell.
	}\label{fig:matroid-simplex}
	\end{figure}
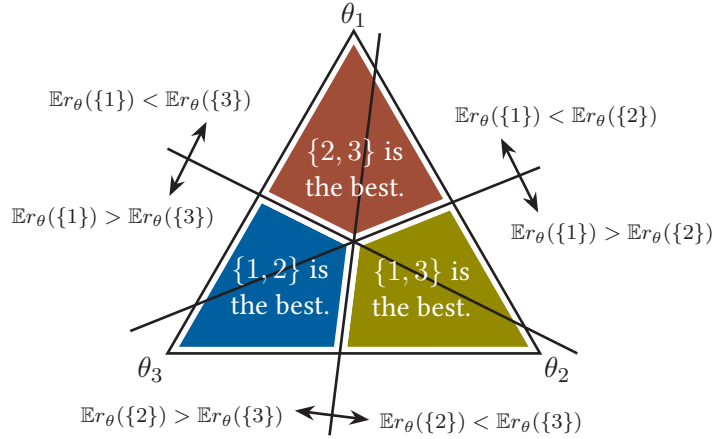

The main obstacle is the fact that the number of variables and constraints in LP \eqref{eq:lp} is exponential in $n \coloneqq |E|$. 
In the case of matroid constants, however, we can show that only a small number of variables can take non-zero values and that a large fraction of constraints is unnecessary. 
Thus, if we can efficiently enumerate all relevant variables and constraints, we can obtain a smaller LP that is equivalent to the original LP \eqref{eq:lp} and can be solved more cheaply. 
This approach is inspired by the one studied in public Bayesian persuasion with no externalities \citep{Xu20}, 
which is equivalent to Bayesian persuasion with a special partition matroid constraint, as mentioned \Cref{subsec:partition-matroids-hardness}.  
Our algorithm can be seen as an extension of \citep{Xu20} to the case of general matroids. 
Below we present the technical details. 

A key observation is that if $d$ is small, the number of combinatorial actions that the receiver can take is also small.
We define the set $\calI^*$ of all combinatorial actions that can be the best response for some posterior distribution as
\begin{equation*}
	\calI^* = \bigg\{ S \in \calI ~\bigg|~ \exists \xi \in \Delta_\Theta \colon S \in \argmax_{S \in \calI} \sum_{\theta \in \Theta} \xi(\theta) r_\theta(S)  \bigg\}.
\end{equation*}
By using this reduced set of combinatorial actions, we formulate a smaller LP as follows: 
\begin{equation}\label{eq:reduced-lp}
\begin{alignedat}{4}
&\text{maximize} &\ \ &\sum_{\theta \in \Theta} \mu(\theta) \sum_{S \in \calI^*} \phi_\theta(S) s_\theta(S)\\
&\text{subject to} & &\sum_{\theta \in \Theta} \mu(\theta) \phi_\theta(S) \left( r_\theta(S) - r_\theta(S') \right) \ge 0& \quad & (S, S' \in \calI^*)\\
& & &\phi_\theta \in \Delta_\calI & \quad & (\theta \in \Theta).
\end{alignedat}
\end{equation}
This LP formulation has variables $\phi_\theta(S)$ only for $S \in \calI^*$, 
and the first constraint exists only for $S, S' \in \calI^*$.
We can show that this reduced LP formulation \eqref{eq:reduced-lp} is equivalent to the original one \eqref{eq:lp}.
The proof is provided in \Cref{sec:reduced-lp}.

\begin{restatable}{proposition}{propreducedlp}\label{prop:reduced-lp}
	There is a bijection between the feasible regions of \eqref{eq:lp} and \eqref{eq:reduced-lp} that does not change the objective value.
\end{restatable}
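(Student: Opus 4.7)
The plan is to exhibit the natural restriction/extension maps between the two feasible regions, and to use the defining property of $\calI^\ast$ both to justify that no probability mass is lost in restricting and that all missing persuasiveness constraints are implied by the retained ones.

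First, I would establish the following key observation: for any feasible $(\phi_\theta)_{\theta \in \Theta}$ of \eqref{eq:lp} and any $S \in \calI \setminus \calI^\ast$, we have $\phi_\theta(S) = 0$ for every $\theta \in \Theta$. Indeed, suppose $\sum_{\theta} \mu(\theta) \phi_\theta(S) > 0$; then the posterior $\xi_S$ is well-defined, and the persuasiveness constraints for the pairs $(S, S')$ with $S' \in \calI$ (after dividing by the positive quantity $\sum_{\theta} \mu(\theta) \phi_\theta(S)$) state that $S \in \argmax_{S'' \in \calI} \sum_{\theta} \xi_S(\theta) r_\theta(S'')$, which would place $S$ into $\calI^\ast$, a contradiction. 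Since $\mu(\theta) > 0$ for all $\theta$, this forces $\phi_\theta(S) = 0$.

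Next I define the forward map $(\phi_\theta)_{\theta} \mapsto (\tilde\phi_\theta)_{\theta}$ by $\tilde\phi_\theta(S) \coloneqq \phi_\theta(S)$ for $S \in \calI^\ast$, and the reverse map by extending a feasible $(\tilde\phi_\theta)_{\theta}$ of \eqref{eq:reduced-lp} with zeros on $\calI \setminus \calI^\ast$. By the observation above, the forward map preserves the simplex constraint $\tilde\phi_\theta \in \Delta_{\calI^\ast}$, and the persuasiveness inequalities for $S, S' \in \calI^\ast$ are immediate; feasibility of the forward map is therefore easy. The two maps are obviously inverse to each other, and the objective value is unchanged because all removed (respectively added) variables carry zero probability mass.

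The main obstacle is feasibility of the reverse map: for an extended $(\phi_\theta)_{\theta}$, we must check the persuasiveness constraint for every $(S, S') \in \calI \times \calI$, not only for pairs in $\calI^\ast \times \calI^\ast$. The case $S \in \calI \setminus \calI^\ast$ is trivial since then $\phi_\theta(S) = 0$, so the left-hand side vanishes. For $S \in \calI^\ast$ and $S' \in \calI \setminus \calI^\ast$, I would use the crucial identity
\begin{equation*}
	\max_{T \in \calI^\ast} \sum_{\theta \in \Theta} \xi(\theta) r_\theta(T) = \max_{T \in \calI} \sum_{\theta \in \Theta} \xi(\theta) r_\theta(T) \quad \text{for every } \xi \in \Delta_\Theta,
\end{equation*}
which holds because any maximizer over $\calI$ belongs to $\calI^\ast$ by definition. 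Applying this to $\xi = \xi_S$ (when $\sum_\theta \mu(\theta) \phi_\theta(S) > 0$; otherwise the inequality is again trivial), the reduced-LP persuasiveness against all $T \in \calI^\ast$ yields $\sum_\theta \xi_S(\theta) r_\theta(S) \ge \max_{T \in \calI} \sum_\theta \xi_S(\theta) r_\theta(T) \ge \sum_\theta \xi_S(\theta) r_\theta(S')$, and multiplying through by $\sum_\theta \mu(\theta) \phi_\theta(S) \ge 0$ gives the required constraint of \eqref{eq:lp}. This completes the proof.
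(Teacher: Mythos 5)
Your proof is correct and follows essentially the same route as the paper's: you establish that feasible solutions of \eqref{eq:lp} vanish outside $\calI^\ast$ (the paper's Proposition~\ref{prop:reduced-variables}) and that persuasiveness against all $S'\in\calI$ follows from persuasiveness against $S'\in\calI^\ast$ because any best response over $\calI$ lies in $\calI^\ast$ (the paper's Proposition~\ref{prop:reduced-constraints}). Your explicit handling of the degenerate case $\sum_\theta \mu(\theta)\phi_\theta(S)=0$ in the reverse direction is a small point of additional care that the paper glosses over, but the underlying argument is the same.
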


As illustrated in \citep[Example 1]{Xu20}, if the receiver's expected utility has certain degeneracy, the receiver may have exponentially many best responses, i.e., $|\calI^*| = \Omega(2^n)$.
To exclude such troubling corner cases, we make a mild non-degeneracy assumption regarding the receiver's utility.

\begin{restatable}{assumption}{asmdegeneracy}\label{asm:degeneracy}
	For each $i \in E$, let $\psi_i = (r_\theta(\{i\}))_{\theta \in \Theta} \in \bbR^\Theta$ be a vector representing the utilities of the $i$th element. 
	We assume that, for any permutation $\pi \colon \{1,\dots,n\} \to E$ and any subset $S \subseteq \{1,\dots,n-1\}$ with $|S| = |\Theta|$, vectors $(\psi_{\pi(i)} - \psi_{\pi(i+1)})_{i \in S}$ are linearly independent. 
\end{restatable}

Under this assumption, we develop a polynomial-time algorithm, which first enumerates all combinatorial actions in $\calI^*$ and then solves the reduced LP \eqref{eq:reduced-lp}.
We below describe our algorithm for general matroids and then present faster algorithms for some special matroids.

\subsection{General Matroids}
We explain how to enumerate $\calI^*\subseteq\calI$ in polynomial time when $(E, \calI)$ is a matroid. 
Let $\aff(\Delta_\Theta) = \{ \xi \colon \Theta \to \bbR \mid \sum_{\theta \in \Theta} \xi(\theta) = 1 \}$ be the affine hull of $\Delta_\Theta$, 
i.e., the smallest affine space containing $\Delta_\Theta$. 
We consider hyperplane $h_{ij} = \{ \xi \in \aff(\Delta_\Theta) \mid \sum_{\theta \in \Theta} \xi(\theta) \left( r_\theta(\{i\}) - r_\theta(\{j\}) \right) = 0 \}$ for each pair of distinct $i,j \in E$. 
Each hyperplane $h_{ij}$ divides $\aff(\Delta_\Theta)$ into two halfspaces: the region where the $i$'s expected utility $\bbE_{\theta \sim \xi} [r_\theta(\{i\})] = \sum_{\theta \in \Theta} \xi(\theta) r_\theta(\{i\})$ is at least that of $j$ and the complement (see \Cref{fig:matroid-simplex}).
Thus, the set of hyperplanes $\calH = \{ h_{ij} \mid i,j \in E, ~ i \neq j\}$ divides $\aff(\Delta_\Theta)$ into small pieces, which we call \textit{cells}. 
Within each cell, the descending order of expected utility values is identical. 

\Cref{alg:general-matroid} presents the details of our algorithm.
We first enumerate all cells and then use the greedy algorithm to obtain a maximum weight independent set for an interior point of each cell. 
Although $\calH$ has degeneracy, by using an algorithm for constructing arrangements of hyperplanes \citep{Ede87}, we can enumerate all cells.
Finally, we solve the reduced LP \eqref{eq:reduced-lp} where $\calI^*$ 
is replaced with the family of obtained independent sets, denoted by $\calI_\calC$.

\begin{algorithm}[tb]
\caption{Algorithm for Bayesian persuasion with a general matroid constraint}
\label{alg:general-matroid}
\textbf{Input}: value oracles of $s_\theta$ and $r_\theta$, independence oracle of $\calI$, $\mu$.\\
\textbf{Output}: signaling scheme $(\phi_\theta)_{\theta \in \Theta}$.\par
\begin{algorithmic}[1]
\STATE Let $\calH = \{ h_{ij} \mid i,j \in E, ~ i \neq j \}$ be the set of hyperplanes, where $h_{ij} = \{ \xi \in \aff(\Delta_\Theta) \mid \sum_{\theta \in \Theta} \xi(\theta) \left( r_\theta(\{i\}) - r_\theta(\{j\}) \right) = 0 \}$.
\STATE Obtain the set of all cells $\calC$ of the arrangement of $\calH$ by using cell enumeration algorithm \citep{Ede87}.
\STATE $\calI_\calC\gets\emptyset$.
\FOR{each cell $C \in \calC$}
	\STATE Let $\xi \in C$ be any interior point of $C$.
	\STATE Apply the greedy algorithm to matroid $(E, \calI)$ and weights $\{\bbE_{\theta \sim \xi} [r_\theta(\{i\})]\}_{i\in E}$ to obtain the maximum weight independent set and add it to $\calI_\calC$.
\ENDFOR
\STATE Solve LP \eqref{eq:reduced-lp} with $\calI^* = \calI_\calC$ and obtain solution $(\phi_\theta)_{\theta \in \Theta}$.
\STATE \textbf{return} signaling scheme $(\phi_\theta)_{\theta \in \Theta}$.
\end{algorithmic}
\end{algorithm}

Now, we show that $\calI_\calC$ obtained by the algorithm recovers $\calI^*$, the family of possible best responses. 
The following lemma guarantees that considering only interior points of cells is sufficient for enumerating $\calI^*$ under \Cref{asm:degeneracy}, i.e.,  we do not need to care about the boundaries of cells. 

\begin{restatable}{lemma}{lemdegeneracy}\label{lem:degeneracy}
	Let $\psi_1, \dots, \psi_n \in \bbR^\Theta$ be vectors that satisfy \Cref{asm:degeneracy}.
	Then, for any permutation $\pi$, $\{ \xi \in \bbR^\Theta \mid \psi_{\pi(1)}^\top \xi \ge \cdots \ge \psi_{\pi(n)}^\top \xi ~\text{and}~ \bfone^\top \xi = 1\} \neq \emptyset$ implies $\{ \xi \in \bbR^\Theta \mid \psi_{\pi(1)}^\top \xi > \cdots > \psi_{\pi(n)}^\top \xi  ~\text{and}~ \bfone^\top \xi = 1\} \neq \emptyset$.
\end{restatable}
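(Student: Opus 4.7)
Proof proposal: The plan is to take any weak-inequality feasible point $\xi_0 \in \aff(\Delta_\Theta)$ (i.e., $\mathbf{1}^\top \xi_0 = 1$ and $\psi_{\pi(1)}^\top \xi_0 \ge \cdots \ge \psi_{\pi(n)}^\top \xi_0$) and perturb it by a small vector $u$ inside $\aff(\Delta_\Theta)$ that pushes every currently tight inequality into the strict regime. Writing $v_i = \psi_{\pi(i)} - \psi_{\pi(i+1)}$ and $I = \{i : v_i^\top \xi_0 = 0\}$ for the tight index set, it suffices to exhibit $u \in \mathbb{R}^\Theta$ with $\mathbf{1}^\top u = 0$ and $v_i^\top u > 0$ for every $i \in I$. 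Then $\xi = \xi_0 + \epsilon u$ lies in $\aff(\Delta_\Theta)$, makes each $i \in I$ strict, and, for small enough $\epsilon > 0$, preserves the strict inequalities already holding at $\xi_0$.

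The first step is to bound $|I|$ using \Cref{asm:degeneracy}. If $|I| \ge |\Theta|$, pick any $S \subseteq I$ with $|S| = |\Theta|$; the assumption makes $(v_i)_{i \in S}$ linearly independent, hence a basis of $\mathbb{R}^\Theta$, so $\mathbf{1} = \sum_{i \in S} c_i v_i$. Pairing with $\xi_0$ gives $1 = \mathbf{1}^\top \xi_0 = \sum_{i \in S} c_i (v_i^\top \xi_0) = 0$, a contradiction. Hence $|I| \le |\Theta| - 1$, and extending $I$ to any $|\Theta|$-sized subset of $\{1,\ldots,n-1\}$ shows, via the same assumption, that $(v_i)_{i \in I}$ itself is linearly independent.

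The second step is to apply Motzkin's transposition theorem to the system $\mathbf{1}^\top u = 0$, $v_i^\top u > 0$ $(i \in I)$. Existence of $u$ fails only if there exist $\lambda \in \mathbb{R}^I_{\ge 0} \setminus \{0\}$ and $\mu \in \mathbb{R}$ with $\sum_{i \in I} \lambda_i v_i = \mu \mathbf{1}$. Testing both sides against $\xi_0$ collapses the left side to $\sum_i \lambda_i (v_i^\top \xi_0) = 0$ and the right side to $\mu$, so $\mu = 0$, yielding $\sum_{i \in I} \lambda_i v_i = 0$. The linear independence of $(v_i)_{i \in I}$ established above forces $\lambda = 0$, contradicting its nontriviality. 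Thus the desired $u$ exists, and the perturbation $\xi = \xi_0 + \epsilon u$ delivers a strict solution.

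The main obstacle is the bookkeeping that ties \Cref{asm:degeneracy} to an alternatives theorem: once one realizes that tightness $v_i^\top \xi_0 = 0$ for $i \in I$ together with $\mathbf{1}^\top \xi_0 = 1$ rules out any dependence $\sum \lambda_i v_i \in \mathbb{R}\mathbf{1}$ except the trivial one, the perturbation argument is routine. The only mild subtlety is ensuring $\{1,\ldots,n-1\}$ is large enough to host a $|\Theta|$-sized subset when invoking the assumption, which is automatic in the regime where the assumption has content.
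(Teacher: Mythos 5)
Your proof is correct and follows the same high-level perturbation strategy as the paper's: start from any weakly feasible $\xi_0$, identify the tight indices, bound their number by $|\Theta|-1$ via \Cref{asm:degeneracy} (your argument and the paper's are interchangeable here), and then perturb inside $\aff(\Delta_\Theta)$ to make all tight inequalities strict. The only substantive difference is in how the perturbation direction is produced. The paper directly solves the linear system $(\psi_{\pi(i)}-\psi_{\pi(i+1)})^\top\eta = 1$ for $i\in T$ together with $\bfone^\top\eta = 0$, asserting a solution exists because there are at most $|\Theta|$ equations in $|\Theta|$ unknowns; strictly speaking this also needs the observation that $\bfone$ is not in the span of $(v_i)_{i\in T}$ (which follows from the same pairing-with-$\xi$ trick you use), a step the paper leaves implicit. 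Your route via Motzkin's transposition theorem is slightly heavier machinery for the same end, but it cleanly packages that consistency check: the alternative $\sum_i\lambda_i v_i = \mu\bfone$ is refuted by pairing with $\xi_0$ to get $\mu=0$ and then invoking linear independence of $(v_i)_{i\in I}$. Both routes share the mild caveat you flag about $n-1\ge|\Theta|$ being needed for the assumption to have bite; this is a corner case neither proof resolves explicitly, but it is standard to assume the interesting regime where the number of elements exceeds the number of states.
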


Therefore, we can enumerate all possible best responses $\calI^*$ by collecting maximum weight independent sets returned by the greedy algorithm, whose behavior depends only on the descending order of the weights. 

\begin{restatable}{theorem}{lemgeneralenumeration}\label{lem:general-enumeration}
	Under \Cref{asm:degeneracy}, $\calI^* \subseteq \calI_\calC$ holds.
	Moreover, $|\calI_\calC| = O(n^{2d})$.
\end{restatable}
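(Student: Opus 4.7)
The plan is to split \Cref{lem:general-enumeration} into its two claims and attack them separately: first the inclusion $\calI^* \subseteq \calI_\calC$, and then the cell count bound $|\calI_\calC| = O(n^{2d})$. The first claim is the substantive part and rests on combining the matroid greedy characterization of maximum weight independent sets with \Cref{lem:degeneracy}.

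For the inclusion, I would fix any $S \in \calI^*$ together with a witness $\xi \in \Delta_\Theta$ for which $S \in \argmax_{T \in \calI} \sum_\theta \xi(\theta) r_\theta(T)$. Linearity of $r_\theta$ turns the objective into a linear weight $w_i = \psi_i^\top \xi$ on elements, and the classical matroid fact I would cite is that every maximum weight independent set is produced by the greedy algorithm under some tie-breaking. Thus there is a permutation $\pi$ of $E$ with $\psi_{\pi(1)}^\top \xi \ge \cdots \ge \psi_{\pi(n)}^\top \xi$ such that processing elements in the order $\pi$ returns exactly $S$. Applying \Cref{lem:degeneracy} to $\pi$, \Cref{asm:degeneracy} then supplies a point $\xi' \in \aff(\Delta_\Theta)$ at which the chain is strict. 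This $\xi'$ crosses none of the hyperplanes of $\calH$, so it lies in the interior of some cell $C \in \calC$ on which the induced order on $\psi_i^\top \xi$ is precisely $\pi$. The matroid greedy routine at any interior point of $C$ is then a deterministic $\pi$-ordered procedure and therefore reproduces $S$, so the algorithm adds $S$ to $\calI_\calC$.

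For the cardinality, the algorithm inserts at most one independent set per cell, so $|\calI_\calC| \le |\calC|$. The set $\calH$ has at most $\binom{n}{2} = O(n^2)$ hyperplanes sitting in the $d$-dimensional affine space $\aff(\Delta_\Theta)$, and the classical arrangement complexity bound of \citep{Ede87} then gives $|\calC| = O((n^2)^d) = O(n^{2d})$.

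The main obstacle is the first part: the witness $\xi$ can sit on the boundary of several cells, and many independent sets may tie as best responses, so one must exclude the pathological possibility that some $S \in \calI^*$ is optimal only on a lower-dimensional locus with no nearby interior witness. The two-step remedy above is precisely the role of \Cref{asm:degeneracy} together with \Cref{lem:degeneracy}: non-degeneracy forces every ordering realized weakly on the boundary to be realized strictly somewhere in the interior, which is exactly what is needed to let a single interior sample per cell catch every element of $\calI^*$.
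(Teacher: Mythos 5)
Your proposal is correct and follows essentially the same route as the paper: both invoke the matroid-greedy characterization of maximum-weight independent sets to extract a permutation $\pi$ consistent with the witness $\xi$, apply \Cref{lem:degeneracy} to promote the weak chain to a strict one at some $\tilde\xi\in\aff(\Delta_\Theta)$, observe that $\tilde\xi$ and the algorithm's sample point lie in the same cell and hence induce the same order and the same greedy output, and finally bound $|\calI_\calC|$ by the $O((n^2)^d)$ cell count of an arrangement of $O(n^2)$ hyperplanes in $d$ dimensions.
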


The proofs of the above lemma and theorem are provided in \Cref{sec:degeneracy}.
Since $|\calH| = O(n^2)$, the cells can be enumerated in $O(n^{2d})$ time \citep{Ede87}. 
The resulting LP \eqref{eq:reduced-lp} has $O(n^{2d})$ variables and $O(n^{4d})$ constraints, which can be solved in $\textrm{poly}(n^d)$ time.

\subsection{Faster Algorithms for Special Matroids}
As described above, our algorithm first enumerates every combinatorial action that can be the receiver's best response for some $\xi \in \Delta_\Theta$, and then solves the reduced LP. 
For several special cases of matroids, we can use faster enumeration algorithms and obtain better upper bounds on $|\calI^*|$. 

\paragraph{Uniform matroids.}
In this case, a receiver's action consists of top-$k$ elements, and thus we can enumerate all possible best responses $\calI^*$ in a simpler manner. 
We consider $n$ hyperplanes on $\Delta_\Theta$ associated with $n$ expected utility values, 
and enumerate all combinations of $k$ hyperplanes that correspond to top-$k$ elements for some $\xi \in \Delta_\Theta$. 
The enumeration problem is closely related to a discrete-geometric subject, called the $k$-level in an arrangement of hyperplanes. 
By using an algorithm of \citep{Mul91} that enumerates all faces of level at most $k$, we can enumerate all $k$-level faces corresponding to $\calI^*$ in $O(k^{\lceil (d+1)/2 \rceil} n^{\lfloor (d+1)/2 \rfloor})$ time for $d \ge 3$, and in $O(k^{\lceil (d+1)/2 \rceil} n^{\lfloor (d+1)/2 \rfloor} \log (n/k))$ time for $d \le 2$.
Moreover, \citet{CS89} presented an upper bound on the number of faces of level at most $k$, which implies $|\calI^*| = O(k^{\lceil (d+1)/2 \rceil} n^{\lfloor (d+1)/2 \rfloor})$.

\paragraph{Partition matroids.}
As with the case of general matroids, we enumerate cells of the hyperplane arrangement, but the number of hyperplanes can be reduced.
A key observation is that the maximum weight independent set is determined by the order of weights of elements in each partition, not by the order of weights of all elements.
Therefore, it is sufficient to enumerate all possible orders of weights in each partition.
Let $n_1 \coloneqq |E_1|, n_2 \coloneqq |E_2|,\dots, n_P \coloneqq |E_P|$ be the size of each partition. 
For each $p \in [P]$, 
we consider hyperplanes for all pair of distinct $i,j \in E_p$. 
As with the case of general matroids, we can obtain $\calI_\calC\supseteq\calI^*$ by enumerating the cells of the arrangement of those hyperplanes.
If $n_1 = \cdots = n_P$, the number of hyperplanes is $O(n^2 / P)$.
Therefore, in this case, we have $|\calI^*| = O(n^{2d} / P^d)$.

\paragraph{Graphic matroids.}
In this case, we need to enumerate all spanning trees $S \in \calI$ that can attain the maximum weight for some $\xi \in \Delta_\Theta$. 
To this end, we can use existing results on the parametric spanning tree problem. 
When $d = 1$, by using an algorithm of \citep{FSE96}, we can enumerate all the spanning trees $\calI^*$ in $O(|V||E| \log |V|)$ time.
Moreover, when $d = 1$, it is known that $|\calI^*| = O(|E||V|^{1/3})$ holds \citep{Dey98}.

\begin{remark}
	For path constraints, even if $d = 1$, the number of paths that can be the shortest path for some $\xi \in \Delta_\Theta$ is known to be $n^{\Omega(\log n)}$ in general \citep{Car83}.
	Thus, our approach, which enumerates every path that can be the shortest path for some $\xi \in \Delta_\Theta$, fails to run in polynomial time even if $d$ is a constant.
	We need a different approach for obtaining an efficient algorithm for path constraints.
\end{remark}

\section{Polynomial-time Algorithm for CCE-Persuasiveness}\label{sec:cce}
We present how to compute approximately optimal CCE-persuasive signaling schemes for Bayesian persuasion with combinatorial actions. 
To motivate us to study CCE-persuasiveness, a relaxed notion of persuasiveness, in the combinatorial setting, let us consider the following investment example. 
Let the receiver be an investor wondering whether to buy an investment trust or build a portfolio by himself. 
If the investor decides to buy an investment trust, stocks are bought following a policy of a portfolio manager, who is the sender.  
The sender wants to sell the investment trust, and a signaling scheme corresponds to her portfolio-management policy. 
The receiver buys the investment trust if it is at least as beneficial as the portfolio built following his prior belief. 
In this example, in contrast to the standard setting, 
the sender's recommendation is regarded as persuasive if it is better than the best action taken based on the receiver's prior belief. 
This idea of persuasiveness can be modeled by CCE-persuasiveness, as detailed below.

As pointed out by \citet{BM16b}, 
Bayesian persuasion is the problem of finding a Bayes correlated equilibrium that is optimal for the sender. 
By relaxing the condition of Bayes correlated equilibria, we can obtain a broader class of equilibria called Bayes coarse correlated equilibria, as mentioned in several studies \citep{CP14,HST15}.
In a Bayes coarse correlated equilibrium, the receiver has no incentive to ignore the signal, 
i.e., following the signal is at least as beneficial as the best action selected based on his prior distribution. 
Let $C = \max_{S \in \calI} \sum_{\theta \in \Theta} \mu(\theta) r_\theta(S)$ be 
the receiver's utility when he selects the best action based on his prior distribution. 
The LP for computing an optimal CCE-persuasive signaling scheme can be written as follows: 
\begin{equation}\label{eq:cce-lp}
	\begin{alignedat}{4}
		&\text{maximize} &\ \ &\sum_{\theta \in \Theta} \sum_{S\in \calI} \mu(\theta) \phi_\theta(S) s_\theta(S)\\
		&\text{subject to} & &\sum_{\theta \in \Theta} \sum_{S\in \calI} \mu(\theta) \phi_\theta(S) r_\theta(S)   \ge  C & \quad &\\
		& & &\phi_\theta \in \Delta_\calI & \quad & (\theta \in \Theta). 
	\end{alignedat}
\end{equation}

We present a sufficient condition for achieving a constant-factor approximation for the above LP. 
For public Bayesian persuasion, the equivalence between exact maximization of certain set functions and computation of CCE-persuasive schemes has been established in \citep[Theorem 5.1]{Xu20}. 
Compared with the existing result, 
ours is restricted to one direction (maximization implies persuasion). 
However, it is applicable to important cases where maximization can be done only approximately. 
For example, we can allow the sender's utility to be a monotone submodular function, as explained later. 
To prove the result, we carefully combine an existing binary search framework for approximate separation oracles \citep{Jan03,JMS03} and the proof of \citep[Theorem 5.1]{Xu20}. 
Throughout this section, we do not make any assumption on the sender's and the receiver's utility functions; 
instead, we assume that an approximate maximization oracle, together with upper and lower bounds on the optimal value, are available. 

\begin{algorithm}[tb]
\caption{Algorithm for CCE-persuasiveness}
\label{alg:cce}
\textbf{Input}: value oracles of $s_\theta$ and $r_\theta$, independence oracle of $\calI$, $\alpha$-approximation oracle, $\mu$, $\vmin$, $\vmax$, $\epsilon$.\\
\textbf{Output}: signaling scheme $(\phi_\theta)_{\theta \in \Theta}$.\par
\begin{algorithmic}[1]
\STATE $v_l \gets 0$, $v_u \gets \vmax$, $v \gets \frac{v_l + v_u}{2}$.
\STATE $\epsilon' \gets \epsilon \vmin$.
\FOR{$t = 1,2,\dots,\lceil \log(\vmax/\epsilon') \rceil + 1$}
	\STATE Apply the ellipsoid method with an $\alpha$-approximate separation oracle to dual LP \eqref{eq:cce-dual} with additional constraint $-C \cdot y + \sum_{\theta \in \Theta} x_\theta \le v$.
	\IF{$v$ is approximately feasible}
		\STATE $v_u \gets v$ and $v \gets \frac{v_l + v_u}{2}$.
	\ELSE [$v$ is infeasible]
		\STATE $v_l \gets v$ and $v \gets \frac{v_l + v_u}{2}$.
	\ENDIF
\ENDFOR
\IF{$v_l = 0$}
	\STATE \textbf{return} any signaling scheme.
\ENDIF
\STATE Solve restricted primal LP to obtain solution $(\phi_\theta)_{\theta \in \Theta}$.
\STATE \textbf{return} signaling scheme $(\phi_\theta)_{\theta \in \Theta}$.
\end{algorithmic}
\end{algorithm}

\begin{theorem}\label{thm:cce-polytime}
	Let $\alpha\in(0,1]$. 
	For any $y\ge0$ and $\theta\in\Theta$, 
	assume that a polynomial-time $\alpha$-approximation oracle that returns $S\in\calI$ with the following guarantee is available: $s_\theta(S) + y \cdot r_\theta(S) \ge \alpha \cdot s_\theta(S') + y \cdot r_\theta(S')$ for any $S'\in\calI$. 
	Moreover, assume $\vmin$ and $\vmax$ to be given as inputs such that (i) $\opt > 0$ implies $\opt > \vmin$ and (ii) $\opt < \vmax$, where $\opt$ is the optimal value.
	Then, for any $\epsilon\in(0, \alpha)$, there is a polynomial-time algorithm that computes an $(\alpha - \epsilon)$-approximate CCE-persuasive signaling scheme. 
\end{theorem}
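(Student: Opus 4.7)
The plan is to apply the ellipsoid method to the dual of \eqref{eq:cce-lp} using the given $\alpha$-approximation oracle as an approximate separation oracle, combined with a binary search on the dual objective and a final restricted-primal solve. With $y\ge 0$ dualising the CCE constraint and $x_\theta\in\bbR$ dualising the simplex equalities, the dual of \eqref{eq:cce-lp} is
\begin{equation*}
\min\ -Cy+\sum_{\theta\in\Theta}x_\theta\quad\text{s.t.}\quad x_\theta\ge \mu(\theta)\bigl(s_\theta(S)+y\cdot r_\theta(S)\bigr)\ \text{for all}\ \theta\in\Theta,\ S\in\calI,\ \text{and}\ y\ge 0,
\end{equation*}
which has only $|\Theta|+1$ variables but exponentially many constraints. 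The primal is feasible (take $\phi_\theta(S^*)=1$ for $S^*\in\argmax_S\sum_\theta\mu(\theta)r_\theta(S)$), so by strong duality, adding the constraint $-Cy+\sum_\theta x_\theta\le v$ and testing feasibility is a valid $\opt\le v$ test.

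The approximate separation oracle works as follows. Given a query $(y,x)$ with $y\ge 0$, for each $\theta$ I invoke the $\alpha$-approximation oracle on the utility $s_\theta+y\cdot r_\theta$ to obtain $S_\theta\in\calI$ and report a violation whenever $x_\theta<\mu(\theta)(s_\theta(S_\theta)+y\cdot r_\theta(S_\theta))$. If the ellipsoid terminates without any violation, then by the oracle guarantee,
\begin{equation*}
x_\theta\ge \mu(\theta)\bigl(s_\theta(S_\theta)+y\cdot r_\theta(S_\theta)\bigr)\ge \mu(\theta)\bigl(\alpha\,s_\theta(S')+y\cdot r_\theta(S')\bigr)\quad\text{for all}\ \theta\in\Theta,\ S'\in\calI,
\end{equation*}
so $(y,x)$ is feasible for the \emph{weakened} dual obtained by replacing each $s_\theta$ with $\alpha\,s_\theta$. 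By strong duality this weakened dual has optimum $\alpha\cdot\opt$. Thus each call at value $v$ returns one of two verdicts: (i) a feasible point of the weakened dual with objective $\le v$, certifying $\alpha\,\opt\le v$; or (ii) a polynomial-size $\tilde{\calI}\subseteq\calI$ of queried constraints whose conjunction with $-Cy+\sum_\theta x_\theta\le v$ is infeasible, certifying $\opt>v$.

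Now binary-search $v$ over $[0,\vmax]$ for $\lceil\log(\vmax/\epsilon')\rceil+1$ steps with $\epsilon'=\epsilon\vmin$, maintaining $v_l$ from verdict (ii) and $v_u$ from verdict (i), so that $\opt>v_l$, $\alpha\,\opt\le v_u$, and $v_u-v_l\le\epsilon'$ at termination. If $v_l=0$, every call was verdict (i), so $\alpha\,\opt\le v_u\le\epsilon'=\epsilon\vmin<\alpha\vmin$ (using $\epsilon<\alpha$); combined with the assumption $\opt>0\Rightarrow\opt>\vmin$, this forces $\opt=0$ and any scheme suffices. Otherwise, let $\tilde{\calI}$ be the subset returned at the last verdict-(ii) call. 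The restricted dual (constraints only for $S\in\tilde{\calI}$) is feasible (set $y=0$ and $x_\theta$ large), yet augmenting it with $-Cy+\sum_\theta x_\theta\le v_l$ is infeasible, so its minimum strictly exceeds $v_l$. By strong duality of the polynomial-size restricted primal (variables $\phi_\theta(S)$ only for $S\in\tilde{\calI}$), this primal is feasible with optimum $>v_l$ and can be solved in polynomial time; padding by zeros extends its optimum to a feasible scheme for \eqref{eq:cce-lp} of value
\begin{equation*}
>v_l\ge v_u-\epsilon\vmin\ge \alpha\,\opt-\epsilon\vmin>\alpha\,\opt-\epsilon\,\opt=(\alpha-\epsilon)\,\opt,
\end{equation*}
where the final strict inequality uses $\opt>\vmin$.

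The main subtlety is the asymmetry of the oracle's guarantee: only $s_\theta$ is scaled by $\alpha$ while $y\cdot r_\theta$ is preserved. This asymmetry is precisely what lets an approximately feasible dual point certify $\alpha\,\opt\le v$ via the weakened-dual interpretation; a symmetric $\alpha$-approximation on the whole linear combination would yield a weaker bound and lose the clean $(\alpha-\epsilon)$ ratio. The remaining care points—handling $v_l=0$ via the $\vmin$ threshold and ensuring the ellipsoid's infeasibility witness is polynomial size so the restricted primal is tractable—are standard, and the overall runtime is polynomial: one ellipsoid run per binary-search step, $O(\log(\vmax/(\epsilon\vmin)))$ steps, and a single restricted primal LP of polynomial size.
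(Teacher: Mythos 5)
Your proof is correct and follows essentially the same route as the paper: dualize \eqref{eq:cce-lp}, run the ellipsoid method with the $\alpha$-approximate separation oracle inside a binary search over the dual objective, and solve a restricted primal built from the queried constraints. Your ``weakened dual'' framing (replacing $s_\theta$ by $\alpha s_\theta$) is an algebraically equivalent restatement of the paper's rescaling $(x_\theta,y)\mapsto(x_\theta/\alpha,y/\alpha)$ in \Cref{lem:approximate_separation}, and the final chain $v_l\ge v_u-\epsilon\vmin\ge\alpha\,\opt-\epsilon\vmin>(\alpha-\epsilon)\opt$ matches the paper's bound.
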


\begin{proof}
	The dual of the LP \eqref{eq:cce-lp} can be written with variables $\{x_\theta\}_{\theta\in\Theta}$ and $y \in \bbR$ as follows: 
	\begin{equation}\label{eq:cce-dual}
	\begin{alignedat}{3}
		&\text{minimize} &\ \ & -C \cdot y + \sum_{\theta \in \Theta} x_\theta\\
		&\text{subject to} & & x_\theta - \mu(\theta) r_\theta(S) \cdot y \ge \mu(\theta) s_\theta(S) & \quad & (\theta \in \Theta, S \in \calI)\\
		& & & y\ge0.
	\end{alignedat}
	\end{equation}
	We consider applying the ellipsoid method to this LP. 
	Here, since we cannot directly use the $\alpha$-approximation oracle as a separation oracle, 
	we employ a binary-search framework of \citep{Jan03,JMS03}, which enables us to combine the ellipsoid method with the following $\alpha$-approximate separation oracle. 
	The proof of the following lemma is provided in \Cref{sec:cce-separation}.

	\begin{restatable}{lemma}{lemseparation}\label{lem:approximate_separation}
		Under the assumption of \Cref{thm:cce-polytime}, 
		there is a polynomial-time $\alpha$-approximate separation oracle 
		such that for any $\{x_\theta\}_{\theta\in\Theta}$ and $y$, 
		it either returns a separating hyperplane or guarantees the feasibility of 
		$\{x_\theta / \alpha\}_{\theta\in\Theta}$ and $y/\alpha$. 
	\end{restatable}

	We perform a binary search on $[0, \vmax]$ to estimate the optimal value of the dual LP (see \Cref{alg:cce}). 
	Given an estimated value $v\in\bbR$, 
	we add $-C \cdot y + \sum_{\theta \in \Theta} x_\theta \le v$ to the constraints in \eqref{eq:cce-dual} 
	and check the feasibility of the resulting inequality system using the ellipsoid method with the $\alpha$-approximate separation oracle. 
	Note that since we can use only an $\alpha$-approximate separation oracle, the ellipsoid method guarantees that the estimated value $v$ is either approximately feasible or infeasible.
	The binary search continues until the interval becomes smaller than $\epsilon' = \epsilon \vmin$.
	Let $v^*$ be the smallest approximately feasible value found by this binary search, and let $(\{x^*_\theta\}_{\theta\in\Theta}, y^*)$ be its corresponding solution. 
	From \Cref{lem:approximate_separation}, 
	$(\{x^*_\theta/\alpha\}_{\theta\in\Theta}, y^*/\alpha)$ satisfies the constraints of the dual LP \eqref{eq:cce-dual}, 
	and thus the optimal value is at most $v^*/\alpha$. 
	Furthermore, since $v^*-\epsilon'$ is infeasible, the optimal value of the dual LP \eqref{eq:cce-dual} is in $(v^*-\epsilon',v^*/\alpha]$. 

	If the binary search does not find any infeasible $v$, then $v^* \in [0, \epsilon \vmin)$ and the optimal value is less than $\vmin$.
	From the assumption on $\vmin$, the optimal value is $0$. Thus, any signaling scheme is optimal.
	Otherwise, to obtain a solution, we construct a restricted version of the primal LP \eqref{eq:cce-lp} as follows. 
	When executing the ellipsoid method for the smallest $v$ such that $v \ge v^*-\epsilon'$, we check polynomially many constraints, which are enough to conclude that the optimal value of the dual LP is larger than $v^*-\epsilon'$. 
	This process yields a restricted dual LP with the polynomially many constraints, 
	each of which corresponds to a primal variable. 
	By allowing only those primal variables to be non-zero, we can obtain a restricted primal LP. 
	From the strong duality, the optimal value of the restricted primal LP is also larger than $v^*-\epsilon'$. 
	On the other hand, the weak duality implies that the optimal value of the (non-restricted) primal LP \eqref{eq:cce-lp} is at most $v^*/\alpha$. 
	Thus, by solving the restricted primal LP, we can compute an $(\alpha-\epsilon)$-approximate solution in polynomial time.  
\end{proof}

For several classes of utility functions, we can implement approximate separation oracles that run in polynomial time.
For example, 
we can exactly solve $\max_{S\in\calI} s_\theta(S) + y\cdot r_\theta(S)$ if both $s_\theta$ and $r_\theta$ are linear and 
$\calI$ is an independence system of a matroid. 
When $s_\theta$ is a monotone submodular function, $r_\theta$ is linear, and $\calI$ is an independence system of a matroid, 
the continuous greedy algorithm can be used as a $(1-1/\mathrm{e})$-approximation oracle \citep[Theorem 1]{SVW17}. 
Furthermore, when $s_\theta$ is a monotone submodular function, $r_\theta$ is a gross substitute function, and $\calI$ is an independence system of a uniform matroid, 
a variant of the continuous greedy algorithm serves as a $(1-1/\mathrm{e})$-approximation oracle \citep{SY18}. 
By considering the minimization version, we can use a shortest-path algorithm as an exact separation oracle if both $s_\theta$ and $r_\theta$ are linear and $\calI$ is the set of paths on some graph.
We detail how to set $\vmin$ and $\vmax$ in \Cref{sec:cce-parameters}.

\section*{Acknowledgements}
The first author is thankful to Metteo Castiglioni for providing an explanation on the difference between the original and binary-vector versions of the \textsc{LINEQ-MA} problem.
The authors thank the anonymous reviewers for their valuable comments.
This work was supported by JST, ACT-X Grant Number JPMJAX190P, Japan.

\bibliographystyle{plainnat}
\bibliography{main}

\appendix
\section{Further Related Work}
\citet{CCM021} studied a signaling problem in a Bayesian network congestion game, in which each of multiple receivers selects a path on a network whose edges are associated with uncertain costs.
They developed a polynomial-time algorithm that computes an optimal signaling scheme under CCE-persuasiveness constraints.
Their setting is similar to ours when we consider path constraints, but there are several differences.
In their setting, there are multiple receivers whose actions affect other receivers' costs through congestion, while in our setting, there is only a single receiver.
Moreover, they assume that the sender's objective is to minimize the social cost (the sum of the receivers' costs), while we do not make such an assumption on the sender's objective.

\citet{KMP14} and \citet{MSS20} considered a similar algorithmic framework for signaling based on the multi-armed bandit problem.
In their setting, an informed principal sends a signal to each of multiple myopic receivers who arrive one by one, and aims at maximizing the social welfare.
Their analyses are based on the framework of regret minimization, and theoretically different from Bayesian persuasion.

\section{Omitted Details of Hardness Results}

\subsection{Proof for Partition Matroid Constraints}\label{sec:hardness-partition}
We prove the hardness result for Bayesian persuasion with a partition matroid constraint by constructing a reduction from public Bayesian persuasion with no externalities.

We describe the problem setting of public Bayesian persuasion with no externalities, which was introduced by \citet{AB19}.
There are $\nrec$ receivers, each of whom takes a binary action from $\{0, 1\}$.
The $i$th receiver obtains a payoff $\tilde{r}_\theta(i, 1) \in \bbR_{\ge 0}$ when he takes action $1$ and $\tilde{r}_\theta(i, 0) \in \bbR_{\ge 0}$ otherwise.
From the assumption of no externalities, these values do not depend on the other receivers' actions.
The sender's utility function $\tilde{s}_\theta \colon 2^{[\nrec]} \to \bbR_{\ge 0}$ is a set function, whose value depends on the set $S \subseteq [\nrec] \coloneqq \{1,\dots,\nrec\}$ of receivers who take action $1$.
We here consider the case where $\tilde{s}_\theta$ is linear.
The sender and the receivers share the same prior distribution $\mu$ on $\theta \in \Theta$.
The sender sends a common signal $\sigma \in \Sigma$ to all the receivers, and then the receivers take an action based on the common posterior distribution.
This problem can be formulated as
\begin{equation}\label{eq:public}
	\begin{alignedat}{4}
	&\text{maximize} &\ \ &\sum_{\theta \in \Theta} \mu(\theta) \sum_{S \subseteq [\nrec]} \phi_\theta(S) \tilde{s}_\theta(S)\\
	&\text{subject to} & &\sum_{\theta \in \Theta} \mu(\theta) \phi_\theta(S) \tilde{r}_\theta(i) \ge 0 & \quad & (S \subseteq [\nrec] ~ \text{with} ~ i \in S)\\
	& & &\sum_{\theta \in \Theta} \mu(\theta) \phi_\theta(S) \tilde{r}_\theta(i) \le 0 & \quad & (S \subseteq [\nrec] ~ \text{with} ~ i \not\in S)\\
	& & &\phi_\theta \in \Delta_{2^{[\nrec]}} & \quad & (\theta \in \Theta),
	\end{alignedat}
\end{equation}
where $\tilde{r}_\theta(i) = \tilde{r}_\theta(i, 1) - \tilde{r}_\theta(i, 0)$.

\citet{DX17} proved that constant-factor approximation for public Bayesian persuasion with no externalities is NP-hard even if we restrict the sender's utility function to $s_\theta(S) = |S|$ for each $\theta \in \Theta$.

\begin{theorem}[Theorem 6.2 in \citep{DX17}]\label{thm:hardness-public}
	For any constant $\alpha \in (0, 1]$, it is NP-hard to compute an $\alpha$-approximate solution for public Bayesian persuasion with no externalities even if the sender's utility function is $s_\theta(S) = |S|$ for each $\theta \in \Theta$.
\end{theorem}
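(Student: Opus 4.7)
The plan is to establish the hardness via a gap-preserving reduction from a known inapproximable combinatorial problem --- a natural candidate being promise graph coloring or Max-$k$-Coverage, whose hardness at every constant approximation factor follows from the PCP theorem combined with parallel repetition.

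Given an instance of the source problem (say a graph $G = (V, E)$ with a promise on its chromatic number, or a set system with a promise on its coverage), I would construct a public Bayesian persuasion instance as follows. Let the states of nature $\Theta$ be indexed by the atomic objects of the instance (vertices or universe elements), with $\mu$ the uniform distribution over $\Theta$. For each constraint of the instance (edge, color-class candidate, or target set) introduce a receiver $i \in [\nrec]$, and choose the utilities $\tilde r_\theta(i,1), \tilde r_\theta(i,0)$ so that the quantity $\tilde r_\theta(i) = \tilde r_\theta(i,1) - \tilde r_\theta(i,0)$ acts as a linear certificate: receiver $i$ is persuaded to take action $1$ under posterior $\xi_\sigma$ precisely when $\sum_\theta \xi_\sigma(\theta)\, \tilde r_\theta(i) \ge 0$, which we engineer to correspond to the event that $\supp(\xi_\sigma)$ is contained in a prescribed ``safe'' subset $T_i \subseteq \Theta$ associated with constraint $i$.

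Under this construction, since the sender's utility is $\tilde s_\theta(S) = |S|$, the sender's expected objective equals $\sum_i \Pr_\sigma[\text{receiver $i$ accepts}]$, which by linearity reduces to counting, across a convex decomposition of $\mu$ into posteriors $\xi_\sigma$, how many safe-set conditions are satisfied on average. I would then verify two directions. \textbf{Completeness:} a YES instance of the source problem (e.g., a small-chromatic coloring or an efficient cover) yields a partition of $\Theta$ into parts each contained in many $T_i$'s simultaneously, inducing signals under which almost every receiver accepts and driving the objective to nearly $\nrec$. \textbf{Soundness:} a signaling scheme with objective exceeding $\alpha \cdot \nrec$ must, by an averaging argument over signals, produce posteriors whose supports collectively certify a structure (a large independent family, or a small-weight cover) contradicting the NO promise.

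The principal obstacle is twofold. First, the reduction must hold for \emph{every} constant $\alpha \in (0, 1]$, not merely at a fixed gap --- this requires either starting from a source problem whose hardness is already amplified to arbitrary constants (e.g., via repeated PCP composition), or applying a tensor/product construction on the persuasion instance itself (roughly, $t$ independent parallel copies with product receiver utilities) to multiply the gap down to $\alpha^t$ for any desired $t$. Second, the linear receiver utilities must encode the support-containment condition sharply: one sets $\tilde r_\theta(i,1) - \tilde r_\theta(i,0)$ to be modestly positive on $\theta \in T_i$ and sufficiently negative off $T_i$, but the constants must be balanced against $\mu$ so that persuasiveness collapses to ``$\xi_\sigma$ effectively avoids the bad states,'' using tie-breaking in favor of the sender at the exact boundary. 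Once these two calibrations are handled, the completeness and soundness bounds follow from standard averaging arguments on the LP \eqref{eq:public}.
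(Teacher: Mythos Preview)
This statement is not proved in the paper at all: it is Theorem~6.2 of \citet{DX17}, imported verbatim as a black box. The paper simply cites it and then uses it, via a straightforward reduction (each receiver $i$ becomes a size-two partition $E_i = \{a_{i,0}, a_{i,1}\}$), to derive \Cref{thm:hardness-partition} on partition matroid constraints. There is no proof of \Cref{thm:hardness-public} for you to compare against.

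Your proposal is a reasonable high-level sketch of how one \emph{might} attack the result of \citet{DX17} directly, and the ingredients you mention (PCP-based gap amplification, encoding support-containment via sign patterns of $\tilde r_\theta(i)$, averaging over signals) are in the right spirit for this line of work. But as written it remains a plan rather than a proof: you have not specified a concrete source problem, have not fixed the utility values, and have not carried out either the completeness or soundness calculation. If you actually want to reconstruct the proof, you should consult \citet{DX17} directly; their argument uses a specific hard promise problem and a carefully tuned instance, and the details (particularly soundness) are more delicate than your sketch suggests.

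For the purposes of the present paper, though, no proof is expected here --- just the citation.
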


We use this result to prove our hardness result for Bayesian persuasion with partition matroid constraints. 
By regarding each receiver of public Bayesian persuasion as a partition, we can construct the reduction and obtain the following hardness result.

\thmhardnesspartition*

\begin{proof}
	Suppose that a problem instance of public Bayesian persuasion with no externalities is given.
	We construct a problem instance of Bayesian persuasion with a partition matroid constraint whose partitions are of size two.
	We use the given set $\Theta$ of states of nature and the prior distribution $\mu$.
	Let $E_i = \{a_{i,0}, a_{i,1}\}$ be a partition for each $i \in [\nrec]$, and define the ground set $E = \bigcup_{i \in [\nrec]} E_i$.
	Let $k_1 = \cdots = k_\nrec = 1$, i.e., $\calI = \{ S \subseteq E \mid | S \cap E_i | \le 1 \}$.
	Given the $i$th receiver's utility $\tilde{r}_\theta(i, 0) \in \bbR_{\ge 0}$ and $\tilde{r}_\theta(i, 1) \in \bbR_{\ge 0}$ for each $i \in [\nrec]$, we define the receiver's utility as $r_\theta(\{a_{i,0}\}) = \tilde{r}_\theta(i, 0)$ and $r_\theta(\{a_{i,1}\}) = \tilde{r}_\theta(i, 1)$.
	Given the sender's utility function $\tilde{s}_\theta \colon 2^{[\nrec]} \to \bbR_{\ge 0}$ for each $\theta \in \Theta$, we define the sender's utility for Bayesian persuasion with the partition matroid constraint as $s_\theta(S) = \tilde{s}_\theta(\{i \in [\nrec] \mid a_{i,1} \in S\})$ for each $\theta \in \Theta$ and $S \subseteq E$.

	We show that for any signaling scheme $(\phi_\theta)_{\theta \in \Theta}$, the sender's expected utility values with respect to $(\tilde{s}_\theta)_{\theta \in \Theta}$ and $(s_\theta)_{\theta \in \Theta}$ are identical.
	Let $\xi_\sigma \in \Delta_\Theta$ be the posterior distribution when signal $\sigma \in \Sigma$ is sent. 
	It is sufficient to show that the sender's expected utility values are identical for each of the posterior distributions $\{\xi_\sigma\}_{\sigma \in \Sigma}$.
	Fix $\sigma \in \Sigma$.
	In the problem instance of public Bayesian persuasion, the $i$th receiver selects $1$ if $\sum_{\theta \in \Theta} \xi_\sigma(\theta) \tilde{r}_\theta(i, 0) \le \sum_{\theta \in \Theta} \xi_\sigma(\theta) \tilde{r}_\theta(i, 1)$ and selects $0$ otherwise.
	If we let $\tilde{S}_\sigma \subseteq [\nrec]$ be the set of the receivers who takes action $1$, then the sender's expected utility value for $\xi_\sigma$ is $\sum_{\theta} \xi_\sigma(\theta) \tilde{s}_\theta(\tilde{S}_\sigma)$.
	On the other hand, in the problem instance of Bayesian persuasion with a partition matroid constraint, for each partition $E_i$ ($i \in [\nrec]$), the receiver selects $a_{i,1}$ if $\sum_{\theta \in \Theta} \xi(\theta) r(a_{i,0}) \le \sum_{\theta \in \Theta} \xi(\theta) r_\theta(a_{i,1})$ and $a_{i,0}$ otherwise.
	Thus, if $S_\sigma \subseteq E$ is the receiver's best response, the sender's expected utility value for $\xi_\sigma$ is $\sum_{\theta} \xi_\sigma(\theta) s_\theta(S_\sigma) = \sum_{\theta} \xi_\sigma(\theta) \tilde{s}_\theta(\{i \in [\nrec] \mid a_{i,1} \in S_\sigma\}) = \sum_{\theta} \xi_\sigma(\theta) \tilde{s}_\theta(\tilde{S}_\sigma)$, which coincides with the one for public Bayesian persuasion. 
	Therefore, \Cref{thm:hardness-public} implies the hardness of achieving a constant-factor approximation for Bayesian persuasion with partition matroid constraints. 
\end{proof}

\subsection{Hardness Result for \textsc{Opt-Signal} Problem}\label{sec:hardness-optsignal}
We give an interesting connection between Bayesian persuasion with a partition matroid constraint and the \textsc{Opt-Signal} problem \citep{CCMG20}.
The \textsc{Opt-Signal} problem is a variant of Bayesian persuasion where the receiver's utility depends on a random type. 
Specifically, 
let $\calT$ be the set of possible types of the receiver and $\calA$ be the set of the receiver's actions.
We define $\tilde{s}_\theta(t, a) \in \bbR_{\ge 0}$ and $\tilde{r}_\theta(t, a) \in \bbR_{\ge 0}$ as the sender's and the receiver's utility functions, respectively, when the receiver takes action $a \in \calA$ and his type is $t \in \calT$.
The sender constructs a signaling scheme $(\phi_\theta)_{\theta \in \Theta}$, which determines a probability distribution over signals $\Sigma$ for each $\theta \in \Theta$.
The receiver observes the type $t \in \calT$, as well as the signal $\sigma$ from the sender, and then takes an action based on them.
The receiver's type conforms to the uniform distribution $\rho \in \Delta_\calT$, i.e., $\rho(t) = \frac{1}{|\calT|}$ for all $t \in \calT$. 
Our goal is to find a signaling scheme $(\phi_\theta)_{\theta \in \Theta}$ that maximizes the sender's expected utility. 

\citet{CCMG20} proved that additive approximation for \textsc{Opt-Signal} is NP-hard.
Their proof uses a problem instance of \textsc{Opt-Signal} with three actions, which yields the following result.

\begin{theorem}[Theorem 1 of \citep{CCMG20}]
	Suppose the sender's utility lies within $[0, 1]$, i.e., $\tilde{s}_\theta(t, a) \in [0, 1]$ for all $t \in \calT$, $a \in \calA$, and $\theta \in \Theta$.
	For any constant $\alpha \in (0, 1]$, it is NP-hard to compute an $\alpha$-additive approximate solution for \textsc{Opt-Signal} even if the number of actions is three.
\end{theorem}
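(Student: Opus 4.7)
The plan is to reduce from a gap version of \textsc{LINEQ-MA} over $\mathbb{F}_2$: given a linear system $Ax = c$, distinguish (YES) instances in which some $x$ satisfies at least a $(1-\epsilon)$-fraction of the equations from (NO) instances in which every $x$ satisfies at most an $\epsilon$-fraction. This gap problem is NP-hard for every constant $\epsilon > 0$, so a polynomial reduction that turns an $\epsilon$-gap in the fraction of satisfied equations into an $\Omega(1)$ additive gap in the sender's optimal \textsc{Opt-Signal} value will suffice.

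Concretely, I would use three actions $\calA = \{a_0, a_1, a_\bot\}$, interpreted as ``guess parity~$0$'', ``guess parity~$1$'', and ``abstain''. The states of nature $\theta\in\Theta$ correspond to the variables of the linear system, so that a posterior distribution $\xi \in \Delta_\Theta$ can be read as a fractional assignment. The receiver's types $\calT$ correspond to the equations, each drawn uniformly under $\rho$. The receiver's utility $\tilde{r}_\theta(t, a)$ is designed so that, for type $t$, the sign of the expected payoff gap between $a_0$ and $a_1$ under posterior $\xi$ reflects whether the corresponding equation is satisfied (its left-hand side matches $c_t$) or violated, with $a_\bot$ calibrated to be the receiver's best response exactly when the posterior leaves the parity ambiguous. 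The sender's utility $\tilde{s}_\theta(t, a)\in\{0,1\}$ pays off only when the receiver picks the parity matching $c_t$.

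By the revelation principle, a signaling scheme is equivalent to a distribution over deterministic direct schemes, each encoding an assignment $x \in \{0,1\}^N$. Under this correspondence, the sender's expected utility equals, up to a negligible additive slack from the abstain action, the expected fraction of equations satisfied by the assignment induced by her scheme. In the YES case, committing to a single near-satisfying assignment yields utility at least $1 - \epsilon$; in the NO case, every deterministic assignment satisfies at most an $\epsilon$-fraction, and by linearity so does every randomized scheme. Choosing $\epsilon < (1-\alpha)/3$ produces an additive gap exceeding $\alpha$, establishing $\alpha$-additive NP-hardness.

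The main obstacle will be the calibration of the abstain action $a_\bot$. Its payoffs must be simultaneously large enough that, in the NO case, the receiver prefers it to guessing between $a_0$ and $a_1$ (which would otherwise give the sender a trivial $1/2$ baseline), and small enough that, in the YES case, the receiver can still be persuaded to commit to $a_0$ or $a_1$ under the posterior induced by a near-satisfying assignment. Balancing these constraints uniformly across all types and equations, while keeping $\tilde{s}_\theta\in[0,1]$ and preserving the $[0,1]$-bound without rescaling that would shrink the gap, is the core technical delicacy of the reduction and explains why the theorem is stated with three actions rather than two.
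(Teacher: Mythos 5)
Your high-level outline (reduce from a gap-LINEQ-MA problem, encode equations as receiver types and variables as states, use three actions) matches the Castiglioni et al.~reduction that the paper mirrors in the uniform-matroid hardness proof (Appendix~B.3), but the specific source problem you chose breaks the argument. You propose reducing from LINEQ-MA over $\mathbb{F}_2$. The construction needs the posterior $\xi\in\Delta_\Theta$ to be read off as a \emph{rational} assignment $\bar{x}_\theta = \xi(\theta)$, because the receiver of type $t$ compares expected utilities of the form $1$, $1 + \bar{c}_t - \sum_\theta \xi(\theta)\bar{A}_{t\theta}$, and $1 - \bar{c}_t + \sum_\theta \xi(\theta)\bar{A}_{t\theta}$, i.e., linear functionals of $\xi$. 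Over $\mathbb{F}_2$, the quantity $\sum_\theta \xi(\theta) A_{t\theta}$ evaluated over the reals bears no relation to the mod-2 parity $\bigoplus_\theta A_{t\theta} x_\theta$, so the soundness direction collapses: a posterior could make many of these real equalities hold even when every nearby binary vector violates most equations mod~$2$. What is actually needed is exactly the asymmetric \emph{binary-vector} version of LINEQ-MA over $\mathbb{Q}$ used in the paper: completeness is promised for some $x\in\{0,1\}^{\nvar}$, but soundness must hold against all $x\in\mathbb{Q}^{\nvar}$, because the adversary's ``assignment'' is a posterior, hence rational and fractional. Substituting an $\mathbb{F}_2$ gap problem is therefore a genuine error, not just a cosmetic one.

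There are two secondary mismatches. First, you invert the role of the third action: in the actual construction the sender is rewarded precisely when the receiver stays with the ``neutral'' action (whose expected utility is identically $1$), and this occurs exactly when all three actions tie because the type-$t$ equation is satisfied under $\xi$ and ties break in the sender's favor; the other two actions' payoffs are the symmetric pair $1 \pm (\bar{c}_t - (\bar{A}\bar{x})_t)$. With that symmetric design, the ``abstain calibration'' you identify as the core technical difficulty simply does not arise. Second, you omit the dummy state $\theta = 0$ carrying prior mass $\mu(0) = \frac{\nvar-1}{\nvar}$, which is what lets the sender's signaling scheme actually induce the posterior $\xi_{\sigma_1}(\theta) = \bar{x}_\theta$ for $\theta\in[\nvar]$ while remaining a valid distribution; without it the completeness step cannot realize the target posterior.
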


Below we prove a hardness result, which is incomparable to their result.
Specifically, we prove the NP-hardness of constant-factor approximation for \textsc{Opt-Signal} with two actions by providing an approximation-preserving reduction from Bayesian persuasion with a partition matroid constraint whose partitions are of size two.

\begin{theorem}\label{thm:hardness-optsignal}
	For any constant $\alpha \in (0, 1]$, it is NP-hard to compute an $\alpha$-approximate solution for \textsc{Opt-Signal} even if the number of actions is two.
\end{theorem}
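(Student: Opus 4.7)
The plan is to give an approximation-preserving reduction from Bayesian persuasion with a partition matroid constraint whose partitions are of size two, which is NP-hard to constant-factor approximate by \Cref{thm:hardness-partition}, to \textsc{Opt-Signal} with $|\calA|=2$. The key observation driving the reduction is that, in the partition-matroid setting with size-two partitions, the receiver's problem decomposes into $\nrec$ independent binary choices, one per partition; this decomposition is mirrored in \textsc{Opt-Signal} by placing the uniform prior over $\nrec$ types and allowing the two actions $\{0,1\}$.

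Concretely, given a partition-matroid instance with partitions $E_i = \{a_{i,0}, a_{i,1}\}$ for $i \in [\nrec]$, linear utilities $s_\theta, r_\theta$, and prior $\mu$ (the hardness of \Cref{thm:hardness-partition} already holds in this form, since \Cref{thm:hardness-public} uses $s_\theta(S) = |S|$), I will construct the \textsc{Opt-Signal} instance with the same $\Theta$ and $\mu$, type set $\calT = [\nrec]$ under the uniform prior $\rho$, and action set $\calA = \{0,1\}$, by setting
\[
\tilde{r}_\theta(i, a) = r_\theta(\{a_{i,a}\}), \qquad \tilde{s}_\theta(i, a) = \nrec \cdot s_\theta(\{a_{i,a}\}),
\]
for each $i \in [\nrec]$, $a \in \{0,1\}$, and $\theta \in \Theta$. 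Any signaling scheme $(\phi_\theta)_{\theta \in \Theta}$ in one instance is used verbatim in the other, so the map between feasible schemes is the identity.

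To verify that the objective is preserved, I will fix a signal $\sigma$ with induced posterior $\xi_\sigma$ and argue the receiver's best responses agree pointwise. On the partition-matroid side, because the constraint $\calI = \{S \subseteq E \mid |S \cap E_i| \le 1\}$ decouples the partitions and $r_\theta$ is linear, the receiver picks $b_i^*(\sigma) \in \arg\max_{b} \sum_\theta \xi_\sigma(\theta) r_\theta(\{a_{i,b}\})$ independently for each $i$. On the \textsc{Opt-Signal} side, the type-$i$ receiver picks $a_i^*(\sigma) \in \arg\max_{a} \sum_\theta \xi_\sigma(\theta) \tilde{r}_\theta(i, a)$, which coincides with $b_i^*(\sigma)$ by construction, with the same tie-breaking rule in favor of the sender. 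Combining this with the $\nrec$-normalization, the sender's expected payoff in \textsc{Opt-Signal} becomes
\[
\sum_\theta \mu(\theta) \sum_\sigma \phi_\theta(\sigma) \sum_{i=1}^{\nrec} \tfrac{1}{\nrec} \tilde{s}_\theta(i, a_i^*(\sigma)) = \sum_\theta \mu(\theta) \sum_\sigma \phi_\theta(\sigma) \sum_{i=1}^{\nrec} s_\theta(\{a_{i, b_i^*(\sigma)}\}),
\]
which by linearity of $s_\theta$ equals the sender's expected utility in the partition-matroid instance. Hence the optimal values coincide and any $\alpha$-approximation transfers, so \Cref{thm:hardness-partition} immediately yields \Cref{thm:hardness-optsignal}.

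The main technical care will go into ensuring the normalization and decomposition line up exactly: the factor $\nrec$ in $\tilde{s}_\theta$ is chosen precisely to cancel the $1/\nrec$ weight that the uniform type prior places on each summand, and the assumption of linear $s_\theta$ is essential for turning "sender's utility on the chosen set $S_\sigma$" into "sum over partitions of sender's utility on the chosen singleton." Since the hard instances behind \Cref{thm:hardness-partition} already satisfy these conditions, no additional reduction gadgets are required; the main obstacle is really just bookkeeping rather than any new combinatorial construction.
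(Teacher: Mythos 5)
Your reduction is essentially identical to the paper's: the same reduction from partition matroids with size-two partitions to \textsc{Opt-Signal} with $|\calA|=2$, using the uniform type prior over $\nrec$ types and scaling the sender's utility by $\nrec$ to cancel the $1/\nrec$ from the type distribution. The only cosmetic differences are that the paper also scales $\tilde{r}_\theta$ by $\nrec$ (irrelevant since best responses are scale-invariant) and includes a short WLOG argument for partitions with $k_i = 2$, which you sidestep by noting the hard instances from \Cref{thm:hardness-partition} already use $k_i = 1$.
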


\begin{proof}
	We construct an approximation-preserving reduction from Bayesian persuasion with a partition matroid constraint whose partitions are of size two to \textsc{Opt-Signal} with two actions.

	Given an instance of Bayesian persuasion with a partition matroid constraint, we construct an instance of \textsc{Opt-Signal} as follows.
  If there exists a partition from which the receiver can select both of the two elements, then he always selects both elements since the utility of each element is non-negative. 
  Thus, we can assume that the receiver can select only one element from each partition without loss of generality. 
  Suppose there are $P$ partitions, and let $E_t = \{(t, 0), (t, 1)\}$ be the $t$th partition for each $t \in [P]$.
  Let $\calA = \{0,1\}$ be the set of actions for the \textsc{Opt-Signal} instance, and $\calT = [P]$ be the set of types.
  The set $\Theta$ of the states of nature and the prior distribution $\mu \in \Delta_\Theta$ are identical to those of the partition-matroid instance.
  The distribution over types is the uniform distribution $\rho(t) = \frac{1}{P}$ for all $t \in [P]$.
  Define the receiver's utility function $\tilde{r}_\theta \colon \calT \times \calA \to \bbR$ as $\tilde{r}_\theta(t, a) = |\calT| r_\theta(\{(t,a)\})$.
  Similarly, define the sender's utility function $\tilde{s}_\theta \colon \calT \times \calA \to \bbR$ as $\tilde{s}_\theta(t, a) = |\calT| s_\theta(\{(t,a)\})$.

	We consider applying the same signaling scheme to the partition-matroid instance and the \textsc{Opt-Signal} instance.
	We show that the sender's expected utility of any signaling scheme in the partition-matroid instance is equal to that in the \textsc{Opt-Signal} instance.
	Let $(\phi_\theta)_{\theta \in \Theta}$ be a signaling scheme.

	First, we consider the sender's expected utility in the \textsc{Opt-Signal} instance.
	Define the expected utility of action $a \in \calA$ for the receiver with type $t \in \calT$ when signal $\sigma \in \Sigma$ is received as  
	\begin{equation*}
		\tilde{r}_\sigma(t, a) = \frac{\sum_{\theta \in \Theta} \mu(\theta) \phi_\theta(\sigma) \tilde{r}_\theta(t, a)}{\sum_{\theta \in \Theta} \mu(\theta) \phi_\theta(\sigma)}.
	\end{equation*}
	Let $a_\sigma^t \in \argmax_{a \in \calA} \tilde{r}_\sigma(t, a)$ be one of the best responses to signal $\sigma \in \Sigma$ for the receiver with type $t \in \calT$.
	Then, the sender's expected utility is $\frac{1}{|\calT|} \sum_{t \in \calT} \sum_{\sigma \in \Sigma} \sum_{\theta \in \Theta} \mu(\theta) \phi_\theta(\sigma) \tilde{s}_\theta(t, a_\sigma^t)$.

	Next, we consider the sender's expected utility in the partition-matroid instance.
	For each signal $\sigma \in \Sigma$, the receiver's expected utility of taking action $S \subseteq E$ is
	\begin{equation*}
		\frac{\sum_{\theta \in \Theta} \mu(\theta) \phi_\theta(\sigma) r_\theta(S)}{\sum_{\theta \in \Theta} \mu(\theta) \phi_\theta(\sigma)} = \frac{1}{|\calT|} \sum_{(t, a) \in S} \frac{\sum_{\theta \in \Theta} \mu(\theta) \phi_\theta(\sigma) \tilde{r}_\theta(t, a)}{\sum_{\theta \in \Theta} \mu(\theta) \phi_\theta(\sigma)} = \frac{1}{|\calT|} \sum_{(t, a) \in S} \tilde{r}_\sigma(t, a).
	\end{equation*}
	Let $S_\sigma^* \in \calI$ be the receiver's best response that maximizes this expected utility.
	From the definition of a partition matroid constraint, $(t, a) \in S^*_\sigma$ is chosen independently for each $t \in \calT$. 
	Therefore, the receiver's best response for each $t \in \calT$ is $\argmax_{a \in \calA} \tilde{r}_\sigma(t, a)$ (ties are broken in favor of the sender). 
	Thus, there exists a bijection between the receiver's best response in the partition-matroid instance and those in the \textsc{Opt-Signal} instance, i.e., $S^*_\sigma = \{ (t, a_\sigma^t) \mid t \in \calT, ~ a_\sigma^t \in \argmax_{a \in \calA} \tilde{r}_\sigma(t, a) \}$.
	The sender's expected utility when the receiver takes action $S^*_\sigma$ for signal $\sigma \in \Sigma$ is
	\begin{align*}
		\sum_{\sigma \in \Sigma} \sum_{\theta \in \Theta} \mu(\theta) \phi_\theta(\sigma) s_\theta(S^*_\sigma)
		&= \sum_{\sigma \in \Sigma} \sum_{\theta \in \Theta} \mu(\theta) \phi_\theta(\sigma) \left( \frac{1}{|\calT|} \sum_{(t, a) \in S^*_\sigma} \tilde{s}_\theta(t, a) \right)\\
		&= \frac{1}{|\calT|} \sum_{\theta \in \Theta}  \sum_{\sigma \in \Sigma} \sum_{t \in \calT} \mu(\theta) \phi_\theta(\sigma) \tilde{s}_\theta(t, a_\sigma^t),
	\end{align*}
	which is equal to the sender's expected utility in the \textsc{Opt-Signal} instance. 
	Therefore, \Cref{thm:hardness-partition} implies the NP-hardness of achieving a constant-factor approximation for \textsc{Opt-Signal} with two actions. 
\end{proof}

\subsection{Proof for Uniform Matroid Constraints}\label{sec:hardness-uniform}

We prove the NP-hardness of constant-factor approximation for Bayesian persuasion with a uniform matroid constraint.
Our proof uses almost the same reduction as the one used for proving the NP-hardness of constant-factor approximation for the \textsc{Opt-Signal} problem.

\citet{CCMG20} proved the hardness of the \textsc{Opt-Signal} problem by reducing a binary-vector version of the \textsc{LIENQ-MA} problem to the \textsc{Opt-Signal} problem.

\begin{definition}[Binary-vector version of $\textsc{LINEQ-MA}(1-\zeta, \delta)$]
	Suppose $0 \le \delta \le 1 - \zeta \le 1$.
	Given linear equations $A x = c$ with $A \in \bbQ^{\nequ \times \nvar}$ and $c \in \bbQ^{\nequ}$, $\textsc{LINEQ-MA}(1-\zeta, \delta)$ is the promise problem of distinguishing between the following two cases: 
	\begin{itemize}
		\item There exists $x \in \{0,1\}^\nvar$ that satisfies at least a $1 - \zeta$ fraction of the equations.
		\item Every $x \in \bbQ^\nvar$ satisfies less than a $\delta$ fraction of the equations.
	\end{itemize}
\end{definition}

In the original definition of $\textsc{LINEQ-MA}(1-\zeta, \delta)$, the former case also considers $x \in \bbQ^\nvar$ instead of $x \in \{0,1\}^\nvar$.
However, the proof for the original \textsc{LINEQ-MA} by \citet{GR09} shows that the binary-vector version is also NP-hard. 

\begin{theorem}[Theorem 5.1 of \citep{GR09}]
	For any $0 \le \delta \le 1 - \zeta \le 1$, the binary-vector version of $\textsc{LINEQ-MA}(1-\zeta, \delta)$ is NP-hard.
\end{theorem}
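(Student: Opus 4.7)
The plan is to observe that the existing NP-hardness reduction of Guruswami--Raghavendra \citep{GR09} for the rational version of $\textsc{LINEQ-MA}(1-\zeta,\delta)$ already produces a witness in $\{0,1\}^{\nvar}$ on the completeness side, while the soundness side is stated for arbitrary $x\in\mathbb{Q}^{\nvar}$. Hence the binary-vector version of the problem is NP-hard under the very same reduction, with no change to its soundness analysis and only a closer look at its completeness case.

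Concretely, I would unpack the GR pipeline: starting from a gap version of Label Cover with arbitrarily small soundness (obtained via parallel repetition of Raz's verifier applied to a base PCP), the reduction applies a Long-Code / Hadamard gadget to produce variables $x_{(u,f)}$ indexed by a prover vertex $u$ and a $\{0,1\}$-valued function $f$ on the label alphabet, and outputs a system of $\mathbb{Q}$-linear equations that arithmetize $\mathbb{F}_2$-parity tests among these long-code bits. In the completeness case, a satisfying labeling $\ell$ yields the explicit assignment $x_{(u,f)} := f(\ell(u))\in\{0,1\}$, which satisfies at least a $1-\zeta$ fraction of the produced $\mathbb{Q}$-equations by the standard GR completeness calculation (parity tests over long-code bits coming from an actual labeling pass with the intended probability). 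Thus the witness is automatically binary. In the soundness case, I would invoke GR's decoding argument verbatim: any $x\in\mathbb{Q}^{\nvar}$ satisfying more than a $\delta$ fraction of the equations can be rounded, via influence and Fourier analysis on the discrete cube, into a Label Cover labeling of non-negligible value, contradicting the starting gap instance.

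The main obstacle is not to devise a new analysis but to verify that the GR construction can simultaneously (i) be tuned, by adjusting the amount of parallel repetition and the noise parameter in the long-code test, so that $\zeta$ and $\delta$ independently realize any target values with $0\le\delta\le 1-\zeta\le 1$, and (ii) retain a completeness witness in $\{0,1\}^{\nvar}$ rather than merely in $\mathbb{Q}^{\nvar}$. Both are standard PCP bookkeeping: parallel repetition drives the soundness of Label Cover (hence $\delta$) arbitrarily close to $0$, the noise parameter tunes $\zeta$, and the specific arithmetization in GR produces equations whose coefficients are independent of the label, so the explicit long-code assignment $x_{(u,f)}=f(\ell(u))$ remains $\{0,1\}$-valued. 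Once this bookkeeping is confirmed, the binary-vector NP-hardness is an immediate corollary of GR's theorem.
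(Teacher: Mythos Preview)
Your proposal is correct and matches the paper's treatment: the paper does not supply its own proof but simply observes that the reduction of \citet{GR09} already produces a $\{0,1\}$-valued completeness witness (the long-code assignment), while the soundness analysis is stated for all $x\in\bbQ^{\nvar}$, so the binary-vector version is NP-hard under the identical construction. Your write-up is more explicit about the Label Cover plus long-code pipeline and the parameter tuning, but the underlying argument is the same one the paper defers to \citep{GR09}.
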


Here, we provide a reduction from the binary-vector version of $\textsc{LINEQ-MA}(1-\zeta, \delta)$ to Bayesian persuasion with a uniform matroid constraint, thus proving its hardness. 

\thmhardnessuniform*

\begin{proof}
	Let $\zeta, \delta \in [0, 1]$ be constants such that $\alpha(1 - 2\zeta) > \delta$.
	We provide a reduction from $\textsc{LINEQ-MA}(1-\zeta,\delta)$ to $\alpha$-approximation to Bayesian persuasion with a uniform matroid constraint.  
	In other words, we show that if an $\alpha$-approximation algorithm for Bayesian persuasion with a uniform matroid constraint is available, we can distinguish the two cases of $\textsc{LINEQ-MA}(1-\zeta,\delta)$ by checking outputs of the $\alpha$-approximation algorithm.  
	This reduction is almost the same as the one for \textsc{Opt-Signal} given by \citet{CCMG20}, but we provide a full description for completeness.

	\paragraph{Reduction.}
	Given $A$ and $c$, we normalize them so that $\bar{A} = \frac{1}{\tau} A$ and $\bar{c} = \frac{1}{\tau^2} c$, where the normalization constant is $\tau = 2 \max \{ \max_{i,j} |A_{ij}|, \max_{i} |c_i|, \nvar^2 \}$.
	Let $\Theta = \{0,1,2,\dots,\nvar\}$ be the set of all possible states of nature.
	Let $\mu \in \Delta_\Theta$ be a prior distribution defined as $\mu(\theta) = \frac{1}{\nvar^2}$ for each $\theta \in [\nvar]$ and $\mu(0) = \frac{\nvar-1}{\nvar}$.
	Let $E = [\nequ] \times \{ a_0, a_1, a_2 \}$ be the ground set.
	We consider a uniform matroid constraint $S \in \calI = \{ S \subseteq E \mid |S| \le \nequ \}$.
	We define the receiver's utility $r_\theta(\{(t,a)\})$ of element $(t,a) \in E$ as
	\begin{align*}
		& r_0(\{(t,a_0)\}) = 1, & & r_0(\{(t,a_1)\}) = 1 + \bar{c}_t, & & r_0(\{(t,a_2)\}) = 1 - \bar{c}_t,\\
		& r_\theta(\{(t,a_0)\}) = 1, & & r_\theta(\{(t,a_1)\}) = 1 - \bar{A}_{t \theta} + \bar{c}_t, & & r_\theta(\{(t,a_2)\}) = 1 + \bar{A}_{t \theta}- \bar{c}_t.
	\end{align*}
	for each $\theta \in [\nvar]$.
	From the definition of $\tau$, these weights are all non-negative.
	Using these weights, we define the receiver's utility as $r_{\theta}(S) = \sum_{(t,a) \in S} r_\theta(\{(t,a)\})$.
	Define the sender's utility as $s_\theta(S) = \frac{1}{\nequ} |\{(t,a) \in S \mid a = a_0\}|$. 

	\paragraph{Completeness.}
	Suppose there exists a vector $\hat{x} \in \{0,1\}^\nvar$ such that at least a $1-\zeta$ fraction of the equations in $A \hat{x} = c$ are satisfied.
  Let $\bar{x} = \frac{1}{\tau} \hat{x}$.
  If $(A \hat{x})_t = c_t$, then $(\bar{A} \bar{x})_t = \bar{c}_t$.
  We show that the optimal expected utility is greater than $1-2\zeta$, implying that an $\alpha$-approximation algorithm for Bayesian persuasion with a uniform matroid constraint outputs a signaling scheme with an expected utility greater than that $\alpha(1-2\zeta)$.

	We consider the following signaling scheme $(\phi_\theta)_{\theta \in \Theta}$ with signals $\Sigma = \{\sigma_1, \sigma_2\}$.
	For each $\theta \in [\nvar]$, we define $\phi_{\theta}(\sigma_1) = q \bar{x}_\theta$ and $\phi_{\theta}(\sigma_2) = 1 - \phi_{\theta}(\sigma_1)$, where $q = \frac{\nvar(\nvar-1)}{1 - \sum_{i=1}^\nvar \bar{x}_i}$. 
	For $\theta = 0$, we define $\phi_0(\sigma_1) = 1$ and $\phi_0(\sigma_2) = 0$.
	Then, the posterior distribution $\xi_{\sigma_1}$ for signal $\sigma_1$ is
	\begin{equation*}
		\xi_{\sigma_1}(\theta) = \frac{\frac{q}{\nvar^2} \bar{x}_\theta}{\sum_{i=1}^\nvar \frac{q}{\nvar^2} \bar{x}_i + \frac{\nvar-1}{\nvar}} = \frac{\frac{\nvar-1}{\nvar} \frac{1}{1-\sum_{i=1}^\nvar \bar{x}_i} \bar{x}_\theta}{\frac{\nvar-1}{\nvar} \frac{1}{1-\sum_{i=1}^\nvar \bar{x}_i}\sum_{i=1}^\nvar \bar{x}_i + \frac{\nvar-1}{\nvar}} = \bar{x}_\theta
	\end{equation*}
	for each $\theta \in [\nvar]$ and $\xi_{\sigma_1}(0) = 1 - \sum_{i=1}^\nvar \bar{x}_i$.
	When $\sigma_1$ is observed, the expected weight of each element $(t,a) \in E$ for the receiver can be calculated as
	\begin{align*}
		\sum_{\theta \in \Theta} \xi_{\sigma_1}(\theta) r_{\theta}(\{(t,a_0)\}) &= 1,\\
		\sum_{\theta \in \Theta} \xi_{\sigma_1}(\theta) r_{\theta}(\{(t,a_1)\}) &= \sum_{\theta=1}^\nvar \bar{x}_\theta \left(1 - \bar{A}_{t\theta} + \bar{c}_t\right) + \left(1 - \sum_{\theta=1}^\nvar \bar{x}_\theta \right) \left(1 + \bar{c}_t\right) = 1 + \bar{c}_t - (\bar{A} \bar{x})_t,\\
		\sum_{\theta \in \Theta} \xi_{\sigma_1}(\theta) r_{\theta}(\{(t,a_2)\}) &= \sum_{\theta=1}^\nvar \bar{x}_\theta \left(1 + \bar{A}_{t\theta} - \bar{c}_t\right) + \left(1 - \sum_{\theta=1}^\nvar \bar{x}_\theta \right) \left(1 - \bar{c}_t\right) = 1 - \bar{c}_t + (\bar{A} \bar{x})_t.
	\end{align*}
	For $t \in [\nequ]$ with $(\bar{A} \bar{x})_t = \bar{c}_t$, the receiver's expected utility of $(t,a)$ is $1$ for each $a \in \{a_0,a_1,a_2\}$.
	For other $t \in [\nequ]$, while the receiver's expected utility of $(t,a_0)$ is $1$, 
	either $(t,a_1)$ or $(t,a_2)$ has a utility value greater than $1$ and the other has a value smaller than $1$. 
	Under the uniform matroid constraint, the receiver selects the elements with the top-$\nequ$ expected utility values, 
	and at most $\zeta \nequ$ elements have expected utilities greater than $1$.
	Since we can break ties in favor of the sender, when $\sigma_1$ is observed, the receiver selects $S \subseteq E$ that contains at most $\zeta \nequ$ elements $(t, a)$ such that $a \neq a_0$, 
	i.e., at least $(1 - \zeta) \nequ$ elements $(t, a) \in S$ satisfy $a = a_0$. 
	Therefore, the expected utility of the sender is at least
	\begin{equation*}
		\left( \sum_{\theta \in \Theta} \mu(\theta) \phi_{\theta}(\sigma_1) \right) \frac{1}{\nequ} (1-\zeta)\nequ \ge \frac{\nvar-1}{\nvar} (1-\zeta) > 1-2\zeta,
	\end{equation*}
	where the second inequality is obtained by letting $\nvar$ be sufficiently large.
	Therefore, an $\alpha$-approximation algorithm returns a signaling scheme with an expected utility greater than $\alpha(1-2\zeta)$.

	\paragraph{Soundness.}
	We prove that if every vector $\hat{x} \in \bbQ^\nvar$ satisfies less than a $\delta$ fraction of the equations $A \hat{x} = c$, no algorithm for Bayesian persuasion with a uniform matroid constraint returns a signaling scheme with an expected utility more than or equal to $\delta$.
	Suppose for contradiction that there exists a signaling scheme $(\phi_\theta)_{\theta \in \Theta}$ that provides the sender with an expected utility at least $\delta$.
	We show there exists a vector $\bar{x} \in \bbQ^\nvar$ satisfying at least a $\delta$ fraction of the equations $\bar{A} \bar{x} = \bar{c}$.

	By an averaging argument, there exists a signal $\sigma \in \Sigma$ such that the receiver's best response to $\sigma$ yields the sender's expected utility that is at least $\delta$.
	Let $S^* \subseteq E$ be this best response to $\sigma$, which contains elements with the top-$\nequ$ expected utility values.
	From the definition of the sender's utility, $S^*$ contains at least $\delta \nequ$ elements $(t, a)$ such that $a = a_0$. 
	Let $\xi_\sigma \in \Delta_\Theta$ be the posterior probability distribution when $\sigma$ is observed.
	The receiver's expected utility value for each element is
	\begin{align*}
		\sum_{\theta \in \Theta} \xi_{\sigma}(\theta) r_{\theta}(\{(t,a_0)\}) &= 1,\\
		\sum_{\theta \in \Theta} \xi_{\sigma}(\theta) r_{\theta}(\{(t,a_1)\}) &= \sum_{\theta \in \Theta} \xi_\sigma(\theta) \left(1 - \bar{A}_{t\theta} + \bar{c}_t\right) = 1 + \bar{c}_t - \sum_{\theta=1}^\nvar \xi_\sigma(\theta) \bar{A}_{t \theta},\\
		\sum_{\theta \in \Theta} \xi_{\sigma}(\theta) r_{\theta}(\{(t,a_2)\}) &= \sum_{\theta \in \Theta} \xi_\sigma(\theta) \left(1 + \bar{A}_{t\theta} - \bar{c}_t\right) = 1 - \bar{c}_t + \sum_{\theta=1}^\nvar \xi_\sigma(\theta) \bar{A}_{t \theta}.
	\end{align*}
	Let $\calT' = \{ t \in [\nequ] \mid \sum_{\theta = 1}^\nvar \xi_\sigma(\theta) \bar{A}_{t \theta} = \bar{c}_t \}$ be the indices of the equations such that the receiver's expected utility values of $(t,a_0)$, $(t,a_1)$, and $(t,a_2)$ are equal to $1$. 
	For $t \in [\nequ] \setminus \calT'$, while the receiver's expected utility of $(t,a_0)$ is $1$, one of the receiver's expected utility values of $(t,a_1)$ or $(t,a_2)$ is greater than $1$ and the other is smaller than $1$.
	Thus, $S^*$ contains $\nequ - |\calT'|$ elements whose expected utility is greater than $1$ and $|\calT'|$ elements whose expected utility equal to $1$.
	Recall that $S^*$ contains at least $\delta \nequ$ elements $(t, a)$ such that $a = a_0$, whose expected utility is $1$.
	Hence, we have $\delta \nequ \le |\calT'|$.
	From the definition of $\calT'$, if we define $\bar{x} \in \bbQ^\nvar$ so that $\bar{x}_\theta = \xi_\sigma(\theta)$ holds for each $\theta \in [\nvar]$, at least $\delta \nequ$ equations in $\bar{A} \bar{x} = \bar{c}$ hold, hence a contradiction.
\end{proof}

\subsection{Proof for Graphic Matroid Constraints}\label{sec:hardness-graphic}
We prove the hardness result for graphic matroid constraints.
We use a reduction similar to the case of uniform matroids, but we need to construct an appropriate graph for obtaining the reduction. 

\thmhardnessgraphic*

\begin{proof}
	As with the case of uniform matroids, we provide a reduction from the binary version of $\textsc{LINEQ-MA}(1-\zeta,\delta)$ with constants $\zeta, \delta \in [0, 1]$ such that $\alpha(1 - 2\zeta) > \delta$. 

	\paragraph{Reduction.}
	We define $\bar{A}$, $\bar{c}$, $\tau$, $\Theta$, and $\mu$ as with the case of uniform matroids.
	We consider an undirected graph with vertices $V = \bigcup_{t = 1}^\nequ \{ v_{t,0}, v_{t,1}, v_{t,2}, v_{t,3}  \}$ and edges $E = \{ \{ v_{t,i}, v_{t,j} \} \mid t \in [\nequ], ~ i,j \in \{0,1,2,3\}, ~ i \neq j \}$, that is, the disjoint union of $\nequ$ complete graphs $K_4$ (see \Cref{fig:graphic-hardness}).
	Given this graph, the receiver can select any spanning tree of $K_4$ for each $t \in [\nequ]$ under the graphic matroid constraint. 
	We define weight $w_\theta(t,i,j)$ of edge $\{v_{t,i}, v_{t,j}\} \in E$ as
	\begin{align*}
		&w_0(t,0,1) = 1, & & w_\theta(t,0,1) = 1,\\
		&w_0(t,0,2) = w_0(t,1,3) = 1 + \bar{c}_t, & & w_\theta(t,0,2) = w_\theta(t,1,3) = 1 - \bar{A}_{t \theta} + \bar{c}_t,\\
		&w_0(t,0,3) = w_0(t,2,3) = 1 - \bar{c}_t, & & w_\theta(t,0,3) = w_\theta(t,2,3) = 1 + \bar{A}_{t \theta}- \bar{c}_t,\\
		& w_0(t,1,2) = L, & & w_\theta(t,1,2) = L
	\end{align*}
	for each $\theta \in [\nvar]$, where $L$ is a sufficiently large value.
	Using these weights, we define the receiver's utility as 
	\[
		r_{\theta}(S) = \sum_{ \{v_{t,i}, v_{t,j}\} \in S} w_\theta(t,i,j).
	\]
	Define the sender's utility as $s_\theta(S) = \frac{1}{\nequ} |\{ t \in [\nequ] \mid \{v_{t,0}, v_{t,1}\} \in S\}|$.

	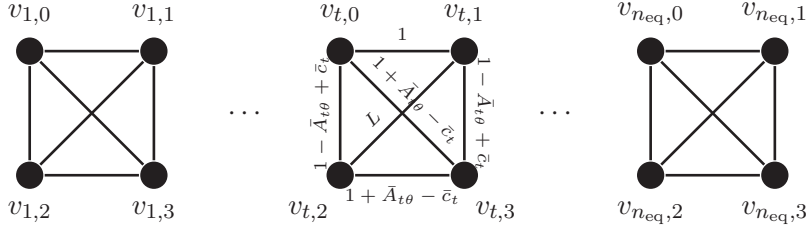
\begin{figure}
	\centering
	\begin{tikzpicture}[line width=1pt]
		\newlength{\kw}
		\setlength{\kw}{0.1\hsize}
		\node [circle, fill=black] at (0, \kw) (v10) {};
		\node [circle, fill=black] at (\kw, \kw) (v11) {};
		\node [circle, fill=black] at (0, 0) (v12) {};
		\node [circle, fill=black] at (\kw, 0) (v13) {};
		\node [circle, fill=black, xshift=0.25\hsize] at (0, \kw) (v20) {};
		\node [circle, fill=black, xshift=0.25\hsize] at (\kw, \kw) (v21) {};
		\node [circle, fill=black, xshift=0.25\hsize] at (0, 0) (v22) {};
		\node [circle, fill=black, xshift=0.25\hsize] at (\kw, 0) (v23) {};
		\node [circle, fill=black, xshift=0.5\hsize] at (0, \kw) (v30) {};
		\node [circle, fill=black, xshift=0.5\hsize] at (\kw, \kw) (v31) {};
		\node [circle, fill=black, xshift=0.5\hsize] at (0, 0) (v32) {};
		\node [circle, fill=black, xshift=0.5\hsize] at (\kw, 0) (v33) {};
		\node [anchor=south] at (v10.north) {$v_{1,0}$};
		\node [anchor=south] at (v11.north) {$v_{1,1}$};
		\node [anchor=north] at (v12.south) {$v_{1,2}$};
		\node [anchor=north] at (v13.south) {$v_{1,3}$};
		\node [anchor=south] at (v20.north) {$v_{t,0}$};
		\node [anchor=south] at (v21.north) {$v_{t,1}$};
		\node [anchor=north east] at (v22.south) {$v_{t,2}$};
		\node [anchor=north west] at (v23.south) {$v_{t,3}$};
		\node [anchor=south] at (v30.north) {$v_{\nequ,0}$};
		\node [anchor=south] at (v31.north) {$v_{\nequ,1}$};
		\node [anchor=north] at (v32.south) {$v_{\nequ,2}$};
		\node [anchor=north] at (v33.south) {$v_{\nequ,3}$};
		\draw [] (v10) -- (v11);
		\draw [] (v10) -- (v12);
		\draw [] (v10) -- (v13);
		\draw [] (v11) -- (v12);
		\draw [] (v11) -- (v13);
		\draw [] (v12) -- (v13);
		\draw [] (v20) to node [above] {\fontsize{7}{7}\selectfont$1$} (v21);
		\draw [] (v20) to node [above, rotate=+90] {\fontsize{7}{7}\selectfont$1 - \bar{A}_{t\theta} + \bar{c}_t$} (v22);
		\draw [] (v20) to node [above, rotate=-45] {\fontsize{7}{7}\selectfont$1 + \bar{A}_{t\theta} - \bar{c}_t$} (v23);
		\draw [] (v21) to node [above, rotate=+45, xshift=-0.2\kw] {\fontsize{7}{7}\selectfont$L$} (v22);
		\draw [] (v21) to node [above, rotate=-90] {\fontsize{7}{7}\selectfont$1 - \bar{A}_{t\theta} + \bar{c}_t$} (v23);
		\draw [] (v22) to node [below] {\fontsize{7}{7}\selectfont$1 + \bar{A}_{t\theta} - \bar{c}_t$} (v23);
		\draw [] (v30) -- (v31);
		\draw [] (v30) -- (v32);
		\draw [] (v30) -- (v33);
		\draw [] (v31) -- (v32);
		\draw [] (v31) -- (v33);
		\draw [] (v32) -- (v33);
		\node[dotted] at ($0.25*(v11)+0.25*(v12)+0.25*(v20)+0.25*(v23)$) {$\cdots$};
		\node[dotted] at ($0.25*(v21)+0.25*(v22)+0.25*(v30)+0.25*(v33)$) {$\cdots$};
	\end{tikzpicture}
	\caption{The undirected graph used to prove the hardness result for Bayesian persuasion with a graphic matroid constraint.}\label{fig:graphic-hardness}
	\end{figure}

	\paragraph{Completeness.} 
	Suppose that there exists a vector $\hat{x} \in \{0,1\}^\nvar$ satisfying at least a $1-\zeta$ fraction of $A \hat{x} = c$ and let $\bar{x} = \frac{1}{\tau} \hat{x}$. 	
	We consider the same signaling scheme as the case of uniform matroids.
	When $\sigma_1$ is observed, for each $t \in [\nequ]$, the expected weight of each edge for the receiver with respect to the posterior distribution $\xi_{\sigma_1} \in \Delta_\Theta$ can be calculated as
	\begin{align*}
		\sum_{\theta \in \Theta} \xi_{\sigma_1}(\theta) w_\theta(t,0,1) &= 1,\\
		\sum_{\theta \in \Theta} \xi_{\sigma_1}(\theta) w_\theta(t,0,2) = \sum_{\theta \in \Theta} \xi_{\sigma_1}(\theta) w_\theta(t,1,3) &= 1 + \bar{c}_t - (\bar{A} \bar{x})_t,\\
		\sum_{\theta \in \Theta} \xi_{\sigma_1}(\theta) w_\theta(t,0,3) = \sum_{\theta \in \Theta} \xi_{\sigma_1}(\theta) w_\theta(t,2,3) &= 1 - \bar{c}_t + (\bar{A} \bar{x})_t,\\
		\sum_{\theta \in \Theta} \xi_{\sigma_1}(\theta) w_\theta(t,1,2) &= L.
	\end{align*}
	For $t \in [\nequ]$ with $(\bar{A} \bar{x})_t = \bar{c}_t$, the receiver's expected utility of $\{v_{t,i},v_{t,j}\}$ is $1$ if $\{i,j\} \neq \{1,2\}$. 
	For each of such $t$, 
	since ties are broken in favor of the sender, 
	the receiver selects a spanning tree consisting of $\{v_{t,1},v_{t,2}\}$, $\{v_{t,0},v_{t,1}\}$, and another edge. 
	For $t \in [\nequ]$ such that $(\bar{A} \bar{x})_t > \bar{c}_t$, the expected utility values of $\{v_{t,0},v_{t,2}\}$ and $\{v_{t,1},v_{t,3}\}$ are greater than $1$.
	For each of such $t \in [\nequ]$, the receiver selects a spanning tree 
	consisting of $\{v_{t,1},v_{t,2}\}$, $\{v_{t,0},v_{t,2}\}$, and $\{v_{t,1},v_{t,3}\}$.
	Similarly, if $(\bar{A} \bar{x})_t < \bar{c}_t$, then the expected utility values of $\{v_{t,0},v_{t,3}\}$ and $\{v_{t,2},v_{t,3}\}$ are greater than $1$.
	For each of such $t \in [\nequ]$, the receiver selects a spanning tree consisting of $\{v_{t,1},v_{t,2}\}$, $\{v_{t,0},v_{t,3}\}$, and $\{v_{t,2},v_{t,3}\}$. 
	Therefore, the expected utility of the sender is at least $1-2\zeta$ as with the case of uniform matroids, where the second inequality is obtained by letting $\nvar$ be sufficiently large.
	Hence, an $\alpha$-approximation algorithm returns a signaling scheme with an expected utility greater than $\alpha(1-2\zeta)$.

	\paragraph{Soundness.}
	As with the case of uniform matroids, we prove by contradiction that if every vector $\hat{x} \in \bbQ^\nvar$ satisfies less than a $\delta$ fraction of the equations in $A \hat{x} = c$, no algorithm for Bayesian persuasion with a graphic matroid constraint returns a signaling scheme with an expected utility greater than or equal to $\delta$.

	Assume that there exists a signaling scheme $(\phi_\theta)_{\theta \in \Theta}$ that provides the sender with an expected utility at least $\delta$.
	By an averaging argument, there exists a signal $\sigma \in \Sigma$ such that the receiver's best response to $\sigma$ yields the sender's expected utility at least $\delta$.
	Let $S^* \subseteq E$ be this best response, which contains at least $\delta \nequ$ edges represented as $\{v_{t,0}, v_{t,1}\}$ for some $t \in [\nequ]$.
	Let $\xi_\sigma \in \Delta_\Theta$ be the posterior probability distribution when $\sigma$ is observed.
	For each $t \in [\nequ]$, the expected weight of each edge can be expressed as 
	\begin{align*}
		\sum_{\theta \in \Theta} \xi_{\sigma_1}(\theta) w_\theta(t,0,1) &= 1,\\
		\sum_{\theta \in \Theta} \xi_{\sigma_1}(\theta) w_\theta(t,0,2) = \sum_{\theta \in \Theta} \xi_{\sigma_1}(\theta) w_\theta(t,1,3) &= 1 + \bar{c}_t - \sum_{\theta = 1}^\nvar \xi_\sigma(\theta) \bar{A}_{t \theta},\\
		\sum_{\theta \in \Theta} \xi_{\sigma_1}(\theta) w_\theta(t,0,3) = \sum_{\theta \in \Theta} \xi_{\sigma_1}(\theta) w_\theta(t,2,3) &= 1 - \bar{c}_t + \sum_{\theta = 1}^\nvar \xi_\sigma(\theta) \bar{A}_{t \theta},\\
		\sum_{\theta \in \Theta} \xi_{\sigma_1}(\theta) w_\theta(t,1,2) &= L.
	\end{align*}
	Let $\calT' = \{ t \in [\nequ] \mid \sum_{\theta = 1}^\nvar \xi_\sigma(\theta) \bar{A}_{t \theta} = \bar{c}_t \}$ be the set of indices such that $\sum_{\theta = 1}^\nvar \xi_\sigma(\theta) \bar{A}_{t \theta} = \bar{c}_t$ holds.
	For $t \in [\nequ] \setminus \calT'$, the receiver does not select $\{v_{t,0}, v_{t,1}\}$.
	Since $S^*$ contains at least $\delta \nequ$ edges represented as $\{v_{t,0}, v_{t,1}\}$ for some $t \in [\nequ]$, it must hold that $|\calT'| \ge \delta \nequ$.
	If we define $\bar{x} \in \bbQ^\nvar$ such that $\bar{x}_\theta = \xi_\sigma(\theta)$ for each $\theta \in [\nvar]$, then at least $\delta \nequ$ equations in $\bar{A} \bar{x} = \bar{c}$ hold, leading to a contradiction.
\end{proof}

\subsection{Proof for Path Constraints}\label{sec:hardness-path}
We prove the hardness result for path constraints.
We again use almost the same reduction as the one used for uniform matroids, but we need to construct an appropriate graph and deal with the minimization setting.

\thmhardnesspath*

\begin{proof}
	As with the case of uniform matroids, we provide a reduction from $\textsc{LINEQ-MA}(1-\zeta,\delta)$. 
	To deal with the minimization setting, we here set constants $\zeta, \delta \in [0, 1]$ so that $1 - \delta > \alpha \zeta (1 + \zeta)$ holds.

	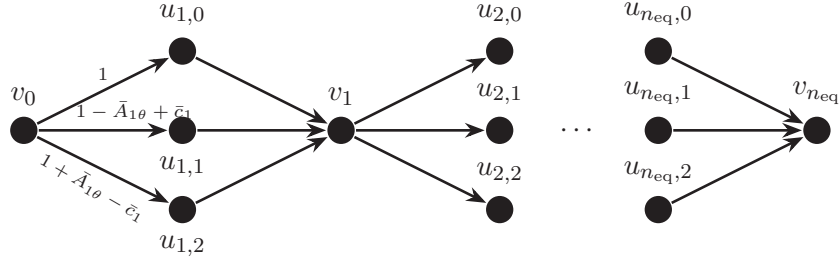
\begin{figure}
	\centering
	\begin{tikzpicture}[line width=1pt]
		\newlength{\el}
		\setlength{\el}{30pt}
		\node [circle, fill=black] at (0\el, 1\el) (v0) {};
		\node [circle, fill=black] at (4\el, 1\el) (v1) {};
		\node [circle, fill=black] at (10\el, 1\el) (v3) {};
		\node [circle, fill=black] at (2\el, 2\el) (u10) {};
		\node [circle, fill=black] at (2\el, 1\el) (u11) {};
		\node [circle, fill=black] at (2\el, 0\el) (u12) {};
		\node [circle, fill=black] at (6\el, 2\el) (u20) {};
		\node [circle, fill=black] at (6\el, 1\el) (u21) {};
		\node [circle, fill=black] at (6\el, 0\el) (u22) {};
		\node [circle, fill=black] at (8\el, 2\el) (u30) {};
		\node [circle, fill=black] at (8\el, 1\el) (u31) {};
		\node [circle, fill=black] at (8\el, 0\el) (u32) {};
		\node [anchor=south] at (v0.north) {$v_0$};
		\node [anchor=south] at (v1.north) {$v_1$};
		\node [anchor=south] at (v3.north) {$v_\nequ$};
		\node [anchor=south] at (u10.north) {$u_{1,0}$};
		\node [anchor=north] at (u11.south) {$u_{1,1}$};
		\node [anchor=north] at (u12.south) {$u_{1,2}$};
		\node [anchor=south] at (u20.north) {$u_{2,0}$};
		\node [anchor=south] at (u21.north) {$u_{2,1}$};
		\node [anchor=south] at (u22.north) {$u_{2,2}$};
		\node [anchor=south] at (u30.north) {$u_{\nequ,0}$};
		\node [anchor=south] at (u31.north) {$u_{\nequ,1}$};
		\node [anchor=south] at (u32.north) {$u_{\nequ,2}$};
		\draw [-{Stealth}] (v0) to node [above] {\fontsize{7}{7}\selectfont$1$} (u10);
		\draw [-{Stealth}] (v0) to node [above, xshift=0.4\el] {\fontsize{7}{7}\selectfont$1 - \bar{A}_{1\theta} + \bar{c}_1$} (u11);
		\draw [-{Stealth}] (v0) to node [below, rotate=-30] {\fontsize{7}{7}\selectfont$1 + \bar{A}_{1\theta} - \bar{c}_1$} (u12);
		\draw [-{Stealth}] (u10) -- (v1);
		\draw [-{Stealth}] (u11) -- (v1);
		\draw [-{Stealth}] (u12) -- (v1);
		\draw [-{Stealth}] (v1) -- (u20);
		\draw [-{Stealth}] (v1) -- (u21);
		\draw [-{Stealth}] (v1) -- (u22);
		\draw [-{Stealth}] (u30) -- (v3);
		\draw [-{Stealth}] (u31) -- (v3);
		\draw [-{Stealth}] (u32) -- (v3);
		\node[dotted] at ($0.5*(u21)+0.5*(u31)$) {$\cdots$};
	\end{tikzpicture}
	\caption{The directed graph used to prove the hardness result for Bayesian persuasion with path constraints}\label{fig:path-hardness}
	\end{figure}

	\paragraph{Reduction.}
	We define $\bar{A}$, $\bar{c}$, $\tau$, $\Theta$, and $\mu$ as with the case of uniform matroids.
	We consider a directed graph with vertices $V = \{ v_0 \} \cup \{ v_t \mid t \in [\nequ] \} \cup \{ u_{t,i} \mid t \in [\nequ], ~ i \in \{0,1,2\} \}$ and edges $E = \{ (v_{t-1}, u_{t,i}) \mid t \in [\nequ], ~ i \in \{0,1,2\} \} \cup \{ (u_{t,i}, v_t) \mid t \in [\nequ], ~ i \in \{0,1,2\} \}$ (see \Cref{fig:path-hardness}).
	Since there are three choices at $v_{t-1}$ for each $t \in [\nequ]$, the number of $v_0$--$v_\nequ$ paths in this graph is $3^\nequ$.
	We define weight $w_\theta(u,v)$ of each edge $(u,v) \in E$ as
	\begin{align*}
		&w_0(v_{t-1}, u_{t,0}) = 1, & & w_\theta(v_{t-1}, u_{t,0}) = 1,\\
		&w_0(v_{t-1}, u_{t,1}) = 1 + \bar{c}_t, & & w_\theta(v_{t-1}, u_{t,1}) = 1 - \bar{A}_{t \theta} + \bar{c}_t,\\
		&w_0(v_{t-1}, u_{t,2}) = 1 - \bar{c}_t, & & w_\theta(v_{t-1}, u_{t,2}) = 1 + \bar{A}_{t \theta}- \bar{c}_t
	\end{align*}
	for each $\theta \in [\nvar]$. 
	We set weights of the other edges to zero, i.e., 
	$w_\theta(u_{t,i}, v_t) = 0$ for each $t \in [\nequ]$, $i \in \{0,1,2\}$, and $\theta \in \Theta$.
	Using these weights, we define the receiver's cost of path $S \subseteq E$ as $r_{\theta}(S) = \sum_{ (u,v) \in S} w_\theta(u,v)$.
	Define the sender's cost as 
	\[
		s_\theta(S) = \frac{1}{\nequ} |\{ (v_t, u_{t,i}) \in S \mid t \in [\nequ], ~ i \in \{1,2\} \}|. 
	\]

	\paragraph{Completeness.}
	Suppose $\hat{x} \in \{0,1\}^\nvar$ to be a vector satisfying at least a $1-\zeta$ fraction of $A \hat{x} = c$ and let $\bar{x} = \frac{1}{\tau} \hat{x}$; therefore, at most a $\zeta$ fraction of $\bar{A}\bar{x} = \bar{c}$ can be violated. 
	We consider the same signaling scheme as the case of uniform matroids.
	When $\sigma_1$ is observed, for each $t \in [\nequ]$, the expected weight of the edges for the receiver can be calculated as
	\begin{align*}
		&\sum_{\theta \in \Theta} \xi_{\sigma_1}(\theta) w_\theta(v_{t-1}, u_{t,0}) = 1,\\
		&\sum_{\theta \in \Theta} \xi_{\sigma_1}(\theta) w_\theta(v_{t-1}, u_{t,1}) = 1 + \bar{c}_t - (\bar{A} \bar{x})_t,\\
		&\sum_{\theta \in \Theta} \xi_{\sigma_1}(\theta) w_\theta(v_{t-1}, u_{t,2}) = 1 - \bar{c}_t + (\bar{A} \bar{x})_t.
	\end{align*}
	For $t \in [\nequ]$ with $(\bar{A} \bar{x})_t = \bar{c}_t$, the receiver's expected cost is identical for all of $(v_{t-1},u_{t,0})$, $(v_{t-1},u_{t,1})$, and $(v_{t-1},u_{t,2})$.
	Since we can break ties in favor of the sender, we can assume that the receiver selects $(v_{t-1},u_{t,0})$ and $(u_{t,0}, v_t)$ for such $t$.
	On the other hand, for $t \in [\nequ]$ such that $(\bar{A} \bar{x})_t \neq \bar{c}_t$, the receiver selects either $(v_{t-1},u_{t,1})$ or $(v_{t-1},u_{t,2})$.
	Therefore, the expected cost of the sender is at most
	\begin{equation*}
		\left( \sum_{\theta \in \Theta} \mu(\theta) \phi_{\theta}(\sigma_1) \right) \frac{1}{\nequ} \zeta \nequ + \left( \sum_{\theta \in \Theta} \mu(\theta) \phi_{\theta}(\sigma_2) \right) \frac{1}{\nequ} \nequ \le \zeta + \frac{1}{\nvar} \le \zeta (1 + \zeta),
	\end{equation*}
	where the second inequality is obtained by letting $\nvar$ be sufficiently large. 
	Therefore, an $\alpha$-approximation algorithm returns a signaling scheme with an expected cost at most $\alpha \zeta (1 + \zeta)$.

	\paragraph{Soundness.}
	As with the case of uniform matroids, we prove by contradiction that if every vector $\hat{x} \in \bbQ^\nvar$ satisfies less than a $\delta$ fraction of the equations in $A \hat{x} = c$, no algorithm for Bayesian persuasion with path constraints returns a signaling scheme with an expected cost at most $1-\delta$.

	Suppose for contradiction that there exists a signaling scheme $(\phi_\theta)_{\theta \in \Theta}$ with which the sender incurs an expected cost at most $1-\delta$.
	By an averaging argument, there exists a signal $\sigma \in \Sigma$ such that the receiver's best response to $\sigma$ yields the sender's expected cost at most $1-\delta$.
	Let $S^* \subseteq E$ be this best response.
	Then, $S^*$ contains at most $(1 - \delta) \nequ$ edges represented as $(v_{t-1},u_{t,1})$ or $(v_{t-1},u_{t,2})$ for some $t \in [\nequ]$. 
	In other words, $S^*$ contains at least $\delta \nequ$ edges represented as $(v_{t-1},u_{t,0})$ for some $t \in [\nequ]$.
	Let $\xi_\sigma \in \Delta_\Theta$ be the posterior probability distribution when $\sigma$ is observed.
	For each $t \in [\nequ]$, the expected weight of each edge is
	\begin{align*}
		&\sum_{\theta \in \Theta} \xi_{\sigma_1}(\theta) w_\theta(v_{t-1}, u_{t,0}) = 1,\\
		&\sum_{\theta \in \Theta} \xi_{\sigma_1}(\theta) w_\theta(v_{t-1}, u_{t,1}) = 1 + \bar{c}_t - \sum_{\theta=1}^\nvar \xi_\sigma(\theta) \bar{A}_{t \theta},\\
		&\sum_{\theta \in \Theta} \xi_{\sigma_1}(\theta) w_\theta(v_{t-1}, u_{t,2}) = 1 - \bar{c}_t + \sum_{\theta=1}^\nvar \xi_\sigma(\theta) \bar{A}_{t \theta}.
	\end{align*}
	Let $\calT' = \{ t \in [\nequ] \mid \sum_{\theta = 1}^\nvar \xi_\sigma(\theta) \bar{A}_{t \theta} = \bar{c}_t \}$ be the set of indices such that $\sum_{\theta = 1}^\nvar \xi_\sigma(\theta) \bar{A}_{t \theta} = \bar{c}_t$ holds.
	For $t \in [\nequ] \setminus \calT'$, the receiver does not select $(v_{t-1}, u_{t,0})$.
	Since $S^*$ contains at least $\delta \nequ$ edges represented as $(v_{t-1},u_{t,0})$ for some $t \in [\nequ]$, it must hold that $|\calT'| \ge \delta \nequ$.
	If we define $\bar{x} \in \bbQ^\nvar$ such that $\bar{x}_\theta = \xi_\sigma(\theta)$ for each $\theta \in [\nvar]$, then at least $\delta \nequ$ equations in $\bar{A} \bar{x} = \bar{c}$ hold, hence a contradiction.
\end{proof}

\section{Omitted Details of Polynomial-time Algorithm for Constant Number of States}

\subsection{Proofs of the Equivalence of the LP Formulations}\label{sec:reduced-lp}

In this section, we prove the following proposition.

\propreducedlp*

We first show that if $S \in \calI$ is not one of the receiver's best responses for any posterior distribution $\xi \in \Delta_\Theta$, variable $\phi_\theta(S)$ never takes a non-zero value.

\begin{proposition}\label{prop:reduced-variables}
	Any feasible solution $(\phi_\theta)_{\theta \in \Theta}$ of \eqref{eq:lp} satisfies $\phi_\theta(S) = 0$ for all $S \in \calI \setminus \calI^*$ and $\theta \in \Theta$.
\end{proposition}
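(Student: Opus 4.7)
The plan is to prove the contrapositive by a direct argument from the persuasiveness constraints: if $\phi_\theta(S) > 0$ for some $\theta \in \Theta$, then the persuasiveness constraint at signal $S$ forces $S$ to be a best response under the induced posterior $\xi_S$, hence $S \in \calI^*$.

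In detail, I would assume for contradiction that there exist $\theta_0 \in \Theta$ and $S \in \calI \setminus \calI^*$ with $\phi_{\theta_0}(S) > 0$. Since $\mu(\theta_0) > 0$ by the standing assumption on the prior, the denominator $\sum_{\theta' \in \Theta} \mu(\theta') \phi_{\theta'}(S)$ is strictly positive, so the posterior distribution $\xi_S \in \Delta_\Theta$ defined by $\xi_S(\theta') = \mu(\theta') \phi_{\theta'}(S) / \sum_{\theta'' \in \Theta} \mu(\theta'') \phi_{\theta''}(S)$ is well defined. The persuasiveness constraint of \eqref{eq:lp} then reads
\begin{equation*}
\sum_{\theta' \in \Theta} \mu(\theta') \phi_{\theta'}(S) \bigl(r_{\theta'}(S) - r_{\theta'}(S')\bigr) \ge 0 \qquad (\forall S' \in \calI),
\end{equation*}
and dividing by the positive normalizer yields $\sum_{\theta' \in \Theta} \xi_S(\theta') r_{\theta'}(S) \ge \sum_{\theta' \in \Theta} \xi_S(\theta') r_{\theta'}(S')$ for every $S' \in \calI$. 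Thus $S \in \argmax_{S' \in \calI} \sum_{\theta' \in \Theta} \xi_S(\theta') r_{\theta'}(S')$, which by definition of $\calI^*$ gives $S \in \calI^*$, contradicting the choice of $S$.

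There is essentially no obstacle here; the statement is almost a tautology given how $\calI^*$ was defined. The only subtle point worth stating explicitly is the use of $\mu(\theta_0) > 0$ to ensure the posterior at signal $S$ is meaningful, which justifies normalizing the persuasiveness inequality and extracting the best-response property.
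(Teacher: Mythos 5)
Your proof is correct and follows essentially the same argument as the paper's: both normalize the persuasiveness constraint at signal $S$ by the positive quantity $\sum_{\theta'} \mu(\theta')\phi_{\theta'}(S)$ to exhibit a posterior $\xi \in \Delta_\Theta$ under which $S$ is a best response, hence $S \in \calI^*$. The only cosmetic difference is that you phrase it as a proof by contradiction while the paper proves the contrapositive directly.
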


\begin{proof}
	We prove the contraposition: if $\phi_\theta(S) > 0$ for some $\theta \in \Theta$, then $S \in \calI^*$.
	Let $(\phi_\theta)_{\theta \in \Theta}$ be an arbitrary feasible solution to \eqref{eq:lp}.
	Fix $S \in \calI$ such that $\phi_\theta(S) > 0$ for some $\theta \in \Theta$.
	Since we assume $\mu(\theta) > 0$ for all $\theta \in \Theta$, we have $\sum_{\theta \in \Theta} \mu(\theta) \phi_{\theta}(S) > 0$.
	Since $(\phi_\theta)_{\theta \in \Theta}$ satisfies $\sum_{\theta \in \Theta} \mu(\theta) \phi_\theta(S) (r_\theta(S) - r_\theta(S')) \ge 0$ for every $S' \in \calI$ due to the first constraint of \eqref{eq:lp}, we have
	\begin{equation*}
		\sum_{\theta \in \Theta} \frac{\mu(\theta) \phi_\theta(S)}{\sum_{\theta' \in \Theta} \mu(\theta') \phi_{\theta'}(S)} r_\theta(S) \ge \sum_{\theta \in \Theta} \frac{\mu(\theta) \phi_\theta(S)}{\sum_{\theta' \in \Theta} \mu(\theta') \phi_{\theta'}(S)} r_\theta(S')
	\end{equation*}
	for every $S' \in \calI$.
	By setting $\xi(\theta) = \frac{\mu(\theta) \phi_\theta(S)}{\sum_{\theta' \in \Theta} \mu(\theta') \phi_{\theta'}(S)}$ for each $\theta \in \Theta$, we have
	\begin{equation*}
		\sum_{\theta \in \Theta} \xi(\theta) r_\theta(S) \ge \sum_{\theta \in \Theta} \xi(\theta) r_\theta(S')
	\end{equation*}
	for every $S' \in \calI$.
	Note that $\xi$ is a probability distribution over $\Theta$, that is, $\xi \in \Delta_\Theta$.
	Therefore, $S$ is the receiver's best response for this $\xi$, implying $S \in \calI^*$.
\end{proof}

We then show that considering the first constraint $\sum_{\theta \in \Theta} \mu(\theta) \phi_\theta(S) \left( r_\theta(S) - r_\theta(S') \right) \ge 0$ only for $S' \in \calI^*$ is sufficient. 
That is, although the reduced formulation \eqref{eq:reduced-lp} has the first constraint only for $S' \in \calI^*$, any feasible solution to \eqref{eq:reduced-lp} also satisfies it for all $S' \in \calI$.

\begin{proposition}\label{prop:reduced-constraints}
	Any feasible solution $(\phi_\theta)_{\theta \in \Theta}$ for \eqref{eq:reduced-lp} satisfies $\sum_{\theta \in \Theta} \mu(\theta) \phi_\theta(S) \left( r_\theta(S) - r_\theta(S') \right) \ge 0$ for all $S \in \calI^*$ and $S' \in \calI$.
\end{proposition}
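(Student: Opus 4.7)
The plan is to show that the persuasiveness constraint restricted to $S' \in \calI^*$ already forces the corresponding inequality for every $S' \in \calI$, by inserting a ``best response in $\calI^*$'' as an intermediate quantity.

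Fix an arbitrary $S \in \calI^*$ and $S' \in \calI$. First I would dispose of the degenerate case: if $\sum_{\theta \in \Theta} \mu(\theta) \phi_\theta(S) = 0$, then (because $\mu(\theta)>0$) every $\phi_\theta(S)=0$, and the left-hand side of the desired inequality vanishes, so it holds trivially. So assume $\sum_{\theta \in \Theta}\mu(\theta)\phi_\theta(S) > 0$. Define the posterior
\begin{equation*}
\xi_S(\theta) \;\coloneqq\; \frac{\mu(\theta)\phi_\theta(S)}{\sum_{\theta'\in\Theta}\mu(\theta')\phi_{\theta'}(S)}\quad\text{for each }\theta\in\Theta,
\end{equation*}
which is an element of $\Delta_\Theta$.

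The key step is to introduce $S^{**} \in \argmax_{T \in \calI}\sum_{\theta\in\Theta} \xi_S(\theta) r_\theta(T)$, a best response against $\xi_S$ among all of $\calI$. By the very definition of $\calI^*$, we have $S^{**}\in\calI^*$. Therefore the reduced persuasiveness constraint of \eqref{eq:reduced-lp}, applied to the pair $(S,S^{**})\in\calI^*\times\calI^*$, gives
\begin{equation*}
\sum_{\theta\in\Theta}\mu(\theta)\phi_\theta(S)\bigl(r_\theta(S) - r_\theta(S^{**})\bigr) \;\ge\; 0,
\end{equation*}
equivalently $\sum_\theta \xi_S(\theta) r_\theta(S) \ge \sum_\theta \xi_S(\theta) r_\theta(S^{**})$. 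Since $S^{**}$ is a best response for $\xi_S$ over all of $\calI$, we also have $\sum_\theta \xi_S(\theta) r_\theta(S^{**}) \ge \sum_\theta \xi_S(\theta) r_\theta(S')$. Chaining these two inequalities and multiplying back through by the positive quantity $\sum_{\theta'}\mu(\theta')\phi_{\theta'}(S)$ yields the desired inequality for the pair $(S,S')$.

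The argument is essentially a one-line insertion of $S^{**}$; there is no real obstacle, but the one place to be careful is the degenerate case $\sum_\theta \mu(\theta)\phi_\theta(S)=0$, where the posterior $\xi_S$ is not defined and one must argue directly that the constraint is vacuously satisfied. Everything else is a straightforward consequence of the definition of $\calI^*$ as the set of best responses for \emph{some} posterior, together with the reduced persuasiveness constraint being imposed on all pairs in $\calI^*\times\calI^*$.
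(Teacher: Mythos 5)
Your proof is correct and follows essentially the same route as the paper: normalize $\mu(\theta)\phi_\theta(S)$ into a posterior $\xi_S \in \Delta_\Theta$, take a best response $S^{**} \in \calI$ against $\xi_S$ (which lies in $\calI^*$ by definition), apply the reduced constraint to the pair $(S,S^{**})$, and chain with optimality of $S^{**}$. The one thing you do that the paper does not is explicitly dispose of the degenerate case $\sum_\theta \mu(\theta)\phi_\theta(S)=0$, where $\xi_S$ is undefined; this is a genuine (if minor) gap in the paper's write-up that your version closes.
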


\begin{proof}
	Fix any feasible solution $(\phi_\theta)_{\theta \in \Theta}$ to \eqref{eq:reduced-lp} and any $S \in \calI^*$.
	From the first constraint of \eqref{eq:reduced-lp}, it holds that 
	\[
		\sum_{\theta \in \Theta} \mu(\theta) \phi_\theta(S) \left( r_\theta(S) - r_\theta(S') \right) \ge 0
	\]
	for all $S' \in \calI^*$.
	By setting $\xi(\theta) = \frac{\mu(\theta) \phi_\theta(S)}{\sum_{\theta' \in \Theta} \mu(\theta') \phi_{\theta'}(S)}$ for each $\theta \in \Theta$, we have
	\begin{equation}\label{eq:br-ineq}
		\sum_{\theta \in \Theta} \xi(\theta) r_\theta(S) \ge \sum_{\theta \in \Theta} \xi(\theta) r_\theta(S')
	\end{equation}
	for all $S' \in \calI^*$.
	Since $\xi \in \Delta_\Theta$, if we let $S^* \in \argmax_{S \in \calI} \sum_{\theta \in \Theta} \xi(\theta) r_\theta(S)$, then we have $S^* \in \calI^*$.
	For any $S' \in \calI$, we have
	\begin{equation*}
		\sum_{\theta \in \Theta} \xi(\theta) r_\theta(S) \ge \sum_{\theta \in \Theta} \xi(\theta) r_\theta(S^*) \ge \sum_{\theta \in \Theta} \xi(\theta) r_\theta(S'),
	\end{equation*}
	where the first inequality is due to $S^* \in \calI^*$ and \eqref{eq:br-ineq}, and the second inequality is due to the definition of $S^*$.
\end{proof}

From \Cref{prop:reduced-variables,prop:reduced-constraints}, any feasible solution to \eqref{eq:lp} can be converted to a feasible solution to \eqref{eq:reduced-lp} by removing variables corresponding to $S \in \calI \setminus \calI^*$, which are all zeros, and any feasible solution to \eqref{eq:reduced-lp} can be converted to a feasible solution to \eqref{eq:lp} by adding zero variables corresponding to $S \in \calI \setminus \calI^*$.
The latter is because thus obtained solution $(\phi_\theta)_{\theta \in \Theta}$ satisfies the first constraint of \eqref{eq:lp} for $S \in \calI \setminus \calI^*$ by making the left-hand side zero.
Moreover, since the removed or added variables are all zeros, this conversion does not change the objective value.
Therefore, \Cref{prop:reduced-lp} holds.

\subsection{Proofs for Enumeration of the Best Responses}\label{sec:degeneracy}

First, we prove the following lemma.

\lemdegeneracy*

\begin{proof}
	Fix any permutation $\pi$ of $E$.
	Let $\xi \in \bbR^\Theta$ be a vector such that $\psi_{\pi(1)}^\top \xi \ge \cdots \ge \psi_{\pi(n)}^\top \xi$ and $\bfone^\top \xi = 1$. 
	If $\xi$ also satisfies $\psi_{\pi(1)}^\top \xi > \cdots > \psi_{\pi(n)}^\top \xi$, then $\{ \xi \in \bbR^\Theta \mid \psi_{\pi(1)}^\top \xi > \cdots > \psi_{\pi(n)}^\top \xi ~\text{and}~ \bfone^\top \xi = 1\} \neq \emptyset$ holds. 
	Otherwise, $\psi_{\pi(i)}^\top \xi = \psi_{\pi(i+1)}^\top \xi$ holds for some $i \in [n-1]$. 
	Below we show that these ties can be broken by perturbing $\xi$ while keeping the sum equal to $1$. 

	Let $T \subseteq [n-1]$ be the set of indices such that 
	$\psi_{\pi(i)}^\top \xi = \psi_{\pi(i+1)}^\top \xi$ holds if and only if $i \in T$. 	
	We prove $|T| \le |\Theta| - 1$.
	Assume for contradiction that $|T| \ge |\Theta|$ holds.
	Let $T' \subseteq T$ be any subset such that $|T'| = |\Theta|$.
	From \Cref{asm:degeneracy}, vectors $\{ \psi_{\pi(i)} - \psi_{\pi(i+1)} \mid i \in T' \}$ are linearly independent.
	Thus, a matrix whose columns are given by $(\psi_{\pi(i)} - \psi_{\pi(i+1)})^\top$ for $i \in T'$ is regular. 
	Since $\xi$ satisfies $(\psi_{\pi(i)} - \psi_{\pi(i+1)})^\top \xi = 0$ for all $i \in T'$, we have $\xi = \bfzero$, which contradicts $\bfone^\top \xi = 1$.
	Hence, we have $|T| \le |\Theta| - 1$.

	We then consider a linear system that consists of $(\psi_{\pi(i)} - \psi_{\pi(i+1)})^\top \eta = 1$ for all $i \in T$ and $\bfone^\top \eta = 0$.
	Since this linear system has $|\Theta|$ variables and at most $|\Theta|$ equations, there exists a solution $\eta \in \bbR^\Theta$ that satisfies all these equations.
	Hence, for sufficiently small $\delta > 0$, it holds that $\psi_{\pi(1)}^\top (\xi + \delta \eta) > \cdots > \psi_{\pi(n)}^\top (\xi + \delta \eta)$ and $\bfone^\top (\xi + \delta \eta) = 1$. Therefore, we obtain $\{ \xi \in \bbR^\Theta \mid \psi_{\pi(1)}^\top \xi > \cdots > \psi_{\pi(n)}^\top \xi ~\text{and}~ \bfone^\top \xi = 1\} \neq \emptyset$.
\end{proof}

By using \Cref{lem:degeneracy}, we prove the following theorem.

\lemgeneralenumeration*

\begin{proof}
	First, we prove $\calI^* \subseteq \calI_\calC$.
	Fix $S \in \calI^*$.
	From the definition of $\calI^*$, $S$ is a maximum weight independent set with respect to weights $\sum_{\theta \in \Theta} \xi(\theta) r_\theta(\{\cdot\})$ for some $\xi \in \Delta_\Theta$.
	Then, from the property of matroids, there exists a permutation $\pi$ of $E$ such that (i) the greedy algorithm returns $S$ by examining the elements in the order of $\pi$ and (ii) $\sum_{\theta \in \Theta} \xi(\theta) r_\theta(\{\pi(1)\}) \ge \cdots \ge \sum_{\theta \in \Theta} \xi(\theta) r_\theta(\{\pi(n)\})$.

	Let $C \in \calC$ be a cell in which $\sum_{\theta \in \Theta} \xi(\theta) r_\theta(\{\pi(1)\}) \ge \cdots \ge \sum_{\theta \in \Theta} \xi(\theta) r_\theta(\{\pi(n)\})$ holds.
	Even if $\xi$ is not an interior point of $C$, under \Cref{asm:degeneracy}, \Cref{lem:degeneracy} implies that there exists some $\tilde{\xi} \in \aff(\Delta_\Theta)$ that is an interior point of $C$.
	Since the output of the greedy algorithm depends only on the order of the weights, $S$ is also a maximum weight independent set for $\sum_{\theta \in \Theta} \tilde{\xi}(\theta) r_\theta(\{\cdot\})$.

	From the construction of $\calI_\calC$, there exists a maximum weight independent set $S' \in \calI_\calC$ for an interior point $\xi'$ of $C$.
	Since both $\tilde{\xi}$ and $\xi'$ are interior points of $C$, the signs of $\sum_{\theta \in \Theta} \tilde{\xi}(\theta) (r_\theta(\{i\}) - r_\theta(\{j\}))$ and $\sum_{\theta \in \Theta} \xi'(\theta) (r_\theta(\{i\}) - r_\theta(\{j\}))$ are identical for every pair $i, j \in E$ with $i \neq j$. 
	Thus, the descending orders of the weights $\sum_{\theta \in \Theta} \tilde{\xi}(\theta) r_\theta(\{\cdot\})$ and $\sum_{\theta \in \Theta} \xi'(\theta) r_\theta(\{\cdot\})$ are also identical.
	Since the output of the greedy algorithm depends only on the descending order of the weights, we obtain $S = S'$.
	Hence, $S \in \calI_\calC$, which leads to $\calI^* \subseteq \calI_\calC$.

	Next, we provide an upper bound on $|\calI_\calC|$.
	The set $\calH$ contains one hyperplane for each pair distinct $i,j \in E$.
	Hence $|\calH| = O(n^2)$.
	From the well-known fact that $n$ hyperplanes decompose $\bbR^d$ into $O(n^d)$ cells (see, e.g., \citet{Ede87}), we obtain $|\calI_\calC| = O(n^{2d})$.
\end{proof}

\section{Omitted Details of Polynomial-time Algorithm for CCE-Persuasiveness}\label{sec:app-cce}

\subsection{Proof for Approximate Separation Oracles}\label{sec:cce-separation}

We prove the following lemma.

\lemseparation*

\begin{proof}
	Since the separation problem for $y\ge0$ is trivial, we below suppose $y\ge0$ to hold. 
	If the $\alpha$-approximation oracle returns $S'\in\calI$ such that $x_\theta < \mu(\theta) (s_\theta(S') + r_\theta(S') \cdot y)$ for some $\theta \in \Theta$, then $x_\theta - \mu(\theta) r_\theta(S') \cdot y = \mu(\theta) s_\theta(S')$ is a separating hyperplane. 
	Otherwise, from the definition of the $\alpha$-approximation oracle, 
	$\{x_\theta\}_{\theta\in\Theta}$ and $y$ are guaranteed to satisfy $x_\theta \ge \mu(\theta) (\alpha\cdot s_\theta(S) + r_\theta(S) \cdot y)$ for every $S\in\calI$, which implies the feasibility of $\{x_\theta / \alpha\}_{\theta\in\Theta}$ and $y/\alpha$. 
\end{proof}

\subsection{On the Parameters of the Binary Search}\label{sec:cce-parameters}

As claimed in \Cref{thm:cce-polytime}, \Cref{alg:cce} needs $\vmin$ and $\vmax$ as inputs such that (i) $\opt > 0$ implies $\opt > \vmin$ and (ii) $\opt < \vmax$, where $\opt$ is the optimal value.
We describe how to compute $\vmin$ and $\vmax$ for \Cref{alg:cce} when $s_\theta$ is an integer-valued monotone submodular function and $r_\theta$ is an integer-valued gross substitute function for each $\theta \in \Theta$.
We can deal with special cases where $s_\theta$ or $r_\theta$ is linear in the same way. 

First, we consider how to set $\vmax$, which must upper-bound $\opt$. 
Let $(\phi_\theta)_{\theta \in \Theta}$ be any signaling scheme.
Since the sender's expected utility is a linear combination of the sender's utility for each $\theta$ and $S$, we have
\begin{equation*}
	\sum_{\theta \in \Theta} \sum_{S\in \calI} \mu(\theta) \phi_\theta(S) s_\theta(S)
	\le \max_{\theta \in \Theta} \max_{S \in \calI} s_\theta(S)
	\le |E| \max_{\theta \in \Theta} \max_{i \in E} s_\theta(\{i\}),
\end{equation*}
where the second inequality is due to submodularity of $s_\theta$.
Thus, by setting $\vmax > |E| \max_{\theta \in \Theta} \max_{i \in E} s_\theta(\{i\})$, 
we obtain $\opt < \vmax$.

Next, we consider how to set $\vmin$, which must lower-bound $\opt$ whenever $\opt>0$. 
The optimal value of LP \eqref{eq:cce-lp} can be expressed as $\sum_{\theta \in \Theta} \sum_{S \in \calI} \mu(\theta) \phi_\theta(S) s_\theta(S)$, where $(\phi_\theta)_{\theta \in \Theta}$ is an optimal solution that corresponds to a vertex of the feasible region of LP \eqref{eq:cce-lp}.
If the optimal value is non-zero, there exists $\theta^* \in \Theta$ and $S^* \in \calI$ such that $\mu(\theta^*) \phi_{\theta^*}(S^*) s_{\theta^*}(S^*) > 0$, which lower-bounds the optimal value. 

In the following, we lower-bound each of $\mu(\theta^*)$, $\phi_{\theta^*}(S^*)$, and $s_{\theta^*}(S^*)$. 
The first and last are bounded as 
\begin{align*}
	\mu(\theta^*) \ge \min_{\theta \in \Theta \colon \mu(\theta) \neq 0} \mu(\theta)
	& & 
	\text{and} 
	& &
	s_{\theta^*}(S^*) \ge \min_{\theta \in \Theta} \min_{i \in E \colon s_{\theta}(\{i\}) \neq 0} s_{\theta}(\{i\}),
\end{align*}
respectively, 
where the latter is due to monotonicity of $s_\theta$.
We discuss how to lower-bound $\phi_{\theta^*}(S^*)$.
The LP \eqref{eq:cce-lp} has $|\Theta|$ equality constraints and $1 + |\Theta| |\calI|$ inequality constraints, while the number of variables is $|\Theta| |\calI|$.
Therefore, each vertex of the polytope is characterized by $|\Theta|$ equality constants and $|\Theta| |\calI| - |\Theta|$ active inequality constraints. 
Note that $1 + |\Theta| |\calI|$ inequality constraints consist of $1$ CCE-persuasiveness constraint and $|\Theta| |\calI|$ non-negativity constraints.
Since $(\phi_\theta)_{\theta \in \Theta}$ is a vertex of the polytope, it is sufficient to lower-bound non-zero entries of every vertex of the polytope, which yields a lower bound on $\phi_{\theta^*}(S^*)$. 

If constraint $\sum_{\theta \in \Theta} \sum_{S\in \calI} \mu(\theta) \phi_\theta(S) r_\theta(S) \ge C$ is inactive (i.e., the strict inequality holds) at vertex $(\phi_\theta)_{\theta \in \Theta}$, then $|\Theta| |\calI| - |\Theta|$ non-negativity constraints must be active.
Since equality constraints $\sum_{S \in \calI} \phi_\theta(S) = 1$ also hold for all $\theta \in \Theta$, each $\phi_\theta$ is a vertex of the probability simplex, i.e., for each $\theta \in \Theta$, $\phi_\theta(S) = 1$ for some $S \in \calI$ and $\phi_\theta(S') = 0$ for the other $S' \in \calI$.
Therefore, all non-zeros are equal to $1$. 

Suppose that $\sum_{\theta \in \Theta} \sum_{S\in \calI} \mu(\theta) \phi_\theta(S) r_\theta(S) = C$ holds at vertex $(\phi_\theta)_{\theta \in \Theta}$. 
In this case, for a single $\theta'\in\Theta$, 
there may exist two variables $\phi_{\theta'}(S_1), \phi_{\theta'}(S_2) \in (0, 1)$ such that 
$\phi_{\theta'}(S_1) + \phi_{\theta'}(S_2) = 1$.  
For the other $\theta\in\Theta$, $\phi_\theta$ is a vertex of the probability simplex, 
i.e., $\phi_\theta(S_\theta) = 1$ for some $S_\theta\in\calI$ and $\phi_\theta(S) = 0$ for the other $S\in\calI$. 
By substituting these into $\sum_{\theta \in \Theta} \sum_{S\in \calI} \mu(\theta) \phi_\theta(S) r_\theta(S) = C$, we obtain
\begin{equation*}
	\sum_{\theta \in \Theta \colon \theta \neq \theta'} \mu(\theta) r_\theta(S_\theta) + \mu(\theta') \phi_{\theta'}(S_1) r_{\theta'}(S_1) + \mu(\theta') \phi_{\theta'}(S_2) r_{\theta'}(S_2)= C.
\end{equation*}
By considering $\phi_{\theta'}(S_1) + \phi_{\theta'}(S_2) = 1$, we can solve it for $\phi_{\theta'}(S_1)$ as
\begin{equation*}
	\phi_{\theta'}(S_1) = \frac{C - \sum_{\theta \in \Theta \colon \theta \neq \theta'} \mu(\theta) r_{\theta}(S_{\theta}) - \mu(\theta') r_{\theta'}(S_2)}{\mu(\theta') (r_{\theta'}(S_1) - r_{\theta'}(S_2))}
	= \frac{\left| C - \sum_{\theta \in \Theta \colon \theta \neq \theta'} \mu(\theta) r_{\theta}(S_{\theta}) - \mu(\theta') r_{\theta'}(S_2) \right|}{\mu(\theta')|r_{\theta'}(S_1) - r_{\theta'}(S_2)|},    
\end{equation*}
and we can solve for $\phi_{\theta'}(S_2)$ in the same way 
(if $r_{\theta'}(S_1) - r_{\theta'}(S_2) = 0$, the vertex $(\phi_\theta)_{\theta \in \Theta}$ is again a $0$-$1$ vector as with the previous case). 
The denominator is at most
\begin{equation*}
	\mu(\theta') |r_{\theta'}(S_1) - r_{\theta'}(S_2)| \le |E| \max_{\tilde{\theta} \in \Theta} \max_{i \in E} r_{\tilde{\theta}}(\{i\})
\end{equation*}
due to submodularity of $r_{\tilde{\theta}}$ (every gross substitute function has submodularity). 
Recalling $C$ is defined as $C = \max_{S' \in \calI} \sum_{\theta \in \Theta} \mu(\theta) r_\theta(S')$, the numerator can be written as $\sum_{\theta \in \Theta} z_\theta \mu(\theta)$ for some integers $z_\theta$; 
therefore, it can be bounded from below by the highest precision for the bit representation of $\mu$.
Thus, the numerator and denominator are represented in polynomial space, and so is $\phi_\theta^*(S^*)$. 
To conclude, if $\opt>0$, we can set $\vmin$ so that it can be represented in polynomial space and $\vmin < \mu(\theta^*) \phi_{\theta^*}(S^*) s_{\theta^*}(S^*)\le\opt$ holds. 
If we set $\vmax$ and $\vmin$ as above, the number of iterations of \Cref{alg:cce}, $O\big(\log\frac{\vmax}{\epsilon\vmin}\big)$, is polynomial.

\end{document}